\newtheorem{theorem}{Theorem}
\newtheorem{lemma}{Lemma}
\newtheorem{problem}{Problem}
\newtheorem{assumption}{Assumption}
\newtheorem{example}{Example}
\newcommand{\proc}[1]{\textup{\textsf{#1}}}
\newcommand{\proceduresname}{Procedure}
\newcommand{\algorithmsname}{Algorithm}
\newcommand*{\rom}[1]{\expandafter\@slowromancap\romannumeral #1@}
  \let\orighyper@refstepcounter\hyper@refstepcounter%
  \newcommand{\specialhyper@refstepcounter}[1]{%
    \orighyper@refstepcounter{\@specialcountername}%
    \renewcommand{\@currentHref}{\@specialcountername.\csname the\@specialcountername\endcsname}%
  }
  \providecommand{\autoref}[1]{\ref{#1}}
  \newcommand{\specialhyper@refstepcounter}[1]{%
  }
\newcommand{\specialcountername}[1]{%
  \def\@specialcountername{#1}
}
\edef\temp@@a{\csname\@specialcountername name\endcsname}
\edef\temp@@b{\algorithmcfname}
\let\hyper@refstepcounter\specialhyper@refstepcounter
\begin{document}

\title{Quantum algorithm for position weight matrix matching}
\author{Koichi Miyamoto}
\email{miyamoto.kouichi.qiqb@osaka-u.ac.jp}
\affiliation{Center for Quantum Information and Quantum Biology, Osaka University, Toyonaka, Osaka 560-0043, Japan}

\author{Naoki Yamamoto}
\affiliation{Department of Applied Physics and Physico-Informatics, Keio University, Yokohama, Kanagawa, Japan}
\affiliation{Quantum Computing Center, Keio University, Yokohama, Kanagawa, Japan}

\author{Yasubumi Sakakibara}
\affiliation{Department of Biosciences and Informatics, Keio University, Yokohama, Kanagawa, Japan}
\affiliation{Quantum Computing Center, Keio University, Yokohama, Kanagawa, Japan}

\date{\today}

\begin{abstract}

We propose two quantum algorithms for a problem in bioinformatics, position weight matrix (PWM) matching, which aims to find segments (sequence motifs) in a biological sequence such as DNA and protein that have high scores defined by the PWM and are thus of informational importance related to biological function.
The two proposed algorithms, the naive iteration method and the 
Monte-Carlo-based method, output matched segments, given the oracular accesses to the entries in the biological sequence and the PWM.
The former uses quantum amplitude amplification (QAA) for sequence motif search, resulting in the query complexity scaling on the sequence length $n$, the sequence motif length $m$ and the number of the PWMs $K$ as $\widetilde{O}\left(m\sqrt{Kn}\right)$, which means speedup over existing classical algorithms with respect to $n$ and $K$.
The latter also uses QAA, and further, quantum Monte Carlo integration for segment score calculation, instead of iteratively operating quantum circuits for arithmetic in the naive iteration method; then it provides the additional speedup with respect to $m$ in some situation.
As a drawback, these algorithms use quantum random access memories and their initialization takes $O(n)$ time.
Nevertheless, our algorithms keep the advantage especially when we search matches in a sequence for many PWMs in parallel.

\end{abstract}

\maketitle

\section{Introduction}

Quantum computing \cite{nielsen_chuang_2010} is an emerging technology that has a potential to provide large benefits to various fields.
Many quantum algorithms that speedup time-consuming problems in classical computing have been proposed, and their applications to practical problems in industry and science have been studied.
In this paper, we focus on an important problem in bioinformatics: {\it position weight matrix matching}.

As a field in bioinformatics, {\it sequence analysis}, which focuses on biological sequences such as DNA sequences and protein amino-acid sequences, has a long history.
Such a sequence is represented as a string of {\it alphabets} $\mathcal{A}$, e.g. $\{A,G,T,C\}$ for nucleabases in DNA and 20 letters for 20 types of amino acids in a protein, and holds biological information.
As a tool to extract important information from a sequence, position weight matrices (PWMs), also known as position specific scoring matrices (PSSMs), are often used.
More concretely, a PWM is a tool to find segments with fixed length $m$ that seems to hold specific information from a sequence with length $n$.
As explained in details later, a PWM $M$ is a matrix of real values, and its entries determine a score given to a segment.
For example, if the $i$th position in a segment has the $j$th alphabet, the $(i,j)$th entry of $M$ is the score of that position.
The score of the segment is defined as the sum of the scores at all the positions.
We then search segments that have scores higher than the predetermined threshold.
In this way, we can find some specific patterns ({\it sequence motifs}) in a sequence, admitting not only exact match but also approximate match to some extent.
PWMs is in fact used to, for example, find transcription factor binding sites in DNA \cite{stormo1982use} and infer the three-dimensional structure of a protein \cite{doi:10.1073/pnas.84.13.4355}.

Following recent developments of next-generation DNA sequencing technology, the volume of data handled in sequence analysis is exponentially growing.
Although many classical algorithms and tools for PWM matching have been devised so far \cite{quandt1995matlnd,10.1093/bioinformatics/16.3.233,10.5555/645635.660986,rajasekaran2002efficient,10.1093/nar/gkg585,mci/Beckstette2004,freschi2005using,beckstette2006fast,10.1007/11780441_36,10.1007/978-3-540-71233-6_19,10.1093/bioinformatics/btp554,4803829,fostier2020blamm}, it is interesting to investigate the potential of novel technologies such as quantum computing to speedup this numerical problem to the extent that classical algorithms cannot reach.

Based on such a motivation, in this paper, we propose two quantum algorithms for PWM matching.
As far as the authors know, this is the first proposal on quantum algorithms for this problem, although there are some quantum algorithms for exact string match \cite{RAMESH2003103,montanaro2017quantum,niroula2021quantum}.

Our quantum algorithms are two-fold: calculating scores of segments and searching high-score segments.
For the latter, we use quantum amplitude amplification (QAA) \cite{brassard2002}, which is an extension of Grover's algorithm for unstructured database search \cite{Grover1996}.
As is well known, this provides the quadratic quantum speedup with respect to the number of searched data points, which now corresponds to $n$ the length of the sequence.

On the former, we consider two approaches for score calculation, which differentiate the two proposed algorithms.
The first one is calculating the segment score by adding the position-wise scores one by one using the quantum circuits for arithmetic.
We name the PWM matching quantum algorithm based on this approach the {\it naive iteration method}.
By this method, for any sequence with length $n$ and any $K$ PWMs for sequence motifs with length $m$, given the oracles to get the specified entry in them, we can find $n_{\rm sol}$ matches with high probability making $\widetilde{O}\left(m\sqrt{Knn_{\rm sol}}\right)$ queries\footnote{The symbol $\widetilde{O}(\cdot)$ hides logarithmic factors in $O(\cdot)$.} to the oracles.
As far as the authors know, there is no known classical PWM matching algorithm whose worst-case complexity is sublinear to $n$, and thus the above complexity just shows the quantum speedup.
Moreover, note that we aim to search the matches for the multiple PWMs at the same time, which has not been considered in classical algorithms.
We achieve the quantum speedup with respect to $K$ too, compared with the $K$-times sequential runs of the algorithm for the different PWMs, whose complexity obviously scales with $K$ linearly. 
 
The second quantum algorithm uses the quantum Monte Carlo integration (QMCI) \cite{Montanaro2015} to calculate the segment score; we call this method the {\it QMCI-based method}. 
QMCI is, similarly to classical Monte Carlo, the method to estimate expectations of random variables, integrals and sums, and also provides the quadratic speedup compared with the classical counterpart.
We therefore use this to calculate the segment score, which is the sum of the position-wise scores, expecting the further speedup from the naive iteration method, especially when $m$ is large and thus the sum has many terms.
This combination of QMCI and QAA is similar to the quantum algorithm for gravitational wave (GW) matched filtering proposed in \cite{Miyamoto2022}.
A drawback of this approach is the possibility of false detection: the result of QMCI inevitably has an error and the erroneous estimate on the segment score can exceed the threshold even if the true score does not.
To cope with this issue, we introduce two levels of the threshold, $w_{\rm soft}$ and $w_{\rm hard}$, which have the following meaning: we never want to miss segments with scores higher than $w_{\rm hard}$ or falsely find those with scores lower than $w_{\rm soft}$, and it is not necessary but good to find those with scores between $w_{\rm soft}$ and $w_{\rm hard}$.
In this setting, the QMCI-based method outputs all the segments with scores higher than $w_{\rm hard}$ possibly along with some of those with in-between scores with high probability, making $\widetilde{O}\left(\frac{mn_{\rm soft}\sqrt{Kn}}{w_{\rm hard}-w_{\rm soft}}\right)$ oracle calls at most, where $n_{\rm soft}$ is the number of the segments with scores higher than $w_{\rm soft}$.
Although this complexity seemingly has the same dependency on $m$ as the naive iteration method, it can be sublinear to $m$, since, as explained later, a reasonable choice of $w_{\rm soft}$ and $w_{\rm hard}$ is such that $w_{\rm hard}-w_{\rm soft}\sim\sqrt{m}$.

Although the complexities of the proposed quantum algorithms are sublinear to $n$ and/or $m$, they require time for preparation.
The oracles used in the algorithms can be implemented by quantum random access memory (QRAM) \cite{Giovannetti2008QRAM}, and initialization of QRAM, that is, registering the values of the entries in the sequence and the PWMs takes time, which is estimated as $O(n)$ in usual situations.
Despite of this initialization cost, our quantum algorithms still have the advantage over the existing classical algorithms, since, among classical ones with $O(n)$ initialization cost, none has the worst-case complexity sublinear to $n$ in the main part.
Also note that, once we prepare the QRAM for a sequence, our quantum algorithms can search the matches between that sequence and many PWMs, with much smaller initialization cost of the QRAM for the PWMs.

The rest of this paper is organized as follows.
Sec. \ref{sec:pre} is preliminary one, where we introduce PWM matching and some building-block quantum algorithms such as QAA and QMCI.
Sec. \ref{sec:QPWMM} is the main part, where we explain our quantum algorithms for PWM matching, the naive iteration method and the QMCI-based method, presenting the detailed procedures in them and the estimations on their complexities.
In Sec. \ref{sec:dis}, we discuss the aforementioned issues on our algorithms, the preparation cost for the QRAMs and the plausible setting
on the segment score thresholds.
Sec. \ref{sec:sum} summarizes this paper.

\section{Preliminary \label{sec:pre}}

\subsection{Notation}

We denote the set of all positive real numbers by $\mathbb{R}_+$ and the set of all nonnegative real numbers by $\mathbb{R}_{\ge 0}$.

For each $n\in\mathbb{N}$, we define $[n]:=\{1,...,n\}$, $[n]_0:=\{0,...,n-1\}$ and $\mathbb{N}_{\ge n}:=\{m\in\mathbb{N} \ | \ m\ge n\}$.

For any probability space $(\Omega,\mathcal{F},\mathbb{P})$ and any random variable $X$ on it, we denote the expectation of $X$ by $E_\mathbb{P}[X]$.

For any finite set $\mathcal{S}$ and any $n\in\mathbb{N}$, we denote $S=(s_0,...,s_{n-1})\in\mathcal{S}^n$, where $s_i\in\mathcal{S}$ for each $i\in[n]_0$, by $S=s_0..s_{n-1}$.

For any equation or inequality $C$, $\mathbbm{1}_C$ takes 1 if $C$ is satisfied, and 0 otherwise.

For any $x\in\mathbb{R}$, if $|x-y|\le\epsilon$ holds for some $y\in\mathbb{R}$ and $\epsilon\in\mathbb{R}_+$, we say that $x$ is an $\epsilon$-approximation of $y$.

\subsection{Position weight matrix matching \label{sec:PWM}}


Here, we formally define the problem we hereafter consider.

\begin{problem}[PWM matching]
    Suppose that we are given the following:
    \begin{itemize}
        \item A finite set $\mathcal{A}$ called the alphabet, whose elements are labeled with integers in $[|\mathcal{A}|]_0$.
        
        \item $K$ matrices $M_k=(M_k(i,a))_{i\in[m]_0,a\in\mathcal{A}}\in\mathbb{R}^{m\times|\mathcal{A}|},k\in[K]_0$ called the PWMs, where $K\in\mathbb{N}$ and $m\in\mathbb{N}_{\ge 2}$ is called the {\it length} of the PWM.
        
        \item An element $S=s_0..s_{n-1}$ in $\mathcal{A}^n$, where $n$ is an integer larger than $m$.
        We call it a sequence.
        \item $w_{\rm th}\in\mathbb{R}$ called the threshold.
    \end{itemize}
    For each $(k,i)\in\mathcal{P}_{\rm all}:=[K]_0\times[n-m+1]_0$, define
    \begin{equation}
        w_{k,i}:=W_{M_k}(s_i..s_{i+m-1}),
    \end{equation}
    where, for every $M=(M(i,a))_{i\in[m]_0,a\in\mathcal{A}}\in\mathbb{R}^{m\times|\mathcal{A}|}$ and $u_0..u_{m-1}\in\mathcal{A}^m$,
    \begin{equation}
        W_M(u_0..u_{m-1}):=\sum_{j=0}^{m-1} M(j,u_j).
    \end{equation}
    Then, we want to find all the elements in the set
    \begin{equation}
        \mathcal{P}_{\rm sol}:=\{(k,i)\in\mathcal{P}_{\rm all} \ | \ w_{k,i}\ge w_{\rm th}\}.
        \label{eq:thres}
    \end{equation}
    \label{prob:PWMM}
\end{problem}

\begin{example}[PWM score calculation]
  {\rm 
  An example of calculating scores for segments using PWM is shown below.
  The following PWM of length 8 represents the binding site motif for a transcription factor: \\
  
    \begin{tabular}{c|cccccccc}
      \texttt{A} & $-1.31$ & $-0.62$ & $-1.31$ & $+0.63$ & $-1.31$ & $-1.31$ & $-1.31$ & $+0.48$ \\
      \texttt{C} & $-0.83$ & $-0.83$ & $+1.12$ & $-0.83$ & $+1.12$ & $+1.12$ & $+1.37$ & $-0.83$ \\
      \texttt{G} & $-0.83$ & $+1.25$ & $+0.27$ & $+0.27$ & $-0.83$ & $+0.27$ & $-0.83$ & $+0.56$ \\
      \texttt{T} & $+0.89$ & $-1.31$ & $-1.31$ & $-1.31$ & $-0.21$ & $-1.31$ & $-1.31$ & $-1.31$ \\
      \hline 
      & (1) & (2) & (3) & (4) & (5) & (6) & (7) & (8)  
    \end{tabular}
    \\
    
    Given this PWM $M$, the score $W_M$ for the DNA sequence (segment) ``TACATGCA'' is calculated as follows:
\begin{eqnarray*}
  & & W_M(\mathtt{TACATGCA}) \\ 
  & & = \ \overbrace{+0.89}^{\mathtt{T}}
  \overbrace{-0.62}^{\mathtt{A}}
  \overbrace{+1.12}^{\mathtt{C}}
  \overbrace{+0.63}^{\mathtt{A}}
  \overbrace{-0.21}^{\mathtt{T}}
  \overbrace{+0.27}^{\mathtt{G}}
  \overbrace{+1.37}^{\mathtt{C}}
  \overbrace{+0.48}^{\mathtt{A}} \\
  & & = \ 3.93 
\end{eqnarray*}
}

{\rm This PWM matching is then applied to a long genome DNA sequence of million bases such that every segment $i$ in the DNA sequence is assigned a score $W_M(u_i\ldots u_{i+m-1})$ and we search $\mathcal{P}_{\rm sol}$, segments with scores higher than the threshold $w_{\rm th}$. }
    \label{example:PWM}
\end{example}

In the Problem \ref{prob:PWMM}, we consider the match with the multiple PWMs simultaneously ($K\ge 2$), although Example \ref{example:PWM} is a single-PWM case ($K=1$).
In general, the DNA sequence of a genome contains hundreds of sequence motifs and the annotation for a genome sequence must be completed by finding all sequence motifs using multiple PWMs simultaneously.
As we will see later, we can achieve the quantum speedup with respect to the number $K$ of mulitple PWMs; that is, our quantum algorithm finds all the matches between the sequence $S$ and the multiple PWMs $\{M_k\}_{k=0}^{K-1}$ faster than iterating individually searching for the matches between $S$ and each PWM.
One might concern that, although it is assumed that all the $K$ PWSs have same length, this is not always the case.
This point is easily settled as follows.
Denoting the lengths of $M_0,...,M_{K-1}$ by $m_0,...,m_{K-1}$, respectively, we set $m:=\max\{m_0,...,m_{K-1}\}$ and, for each $k\in[K]_0$, replace $M_k$ with the $m\times |\mathcal{A}|$ matrix whose first to $m_k$th rows are those in the original $M_k$ and $(m_k+1)$th to $m$th rows are filled with 0.
Note that the score of each segment in $S$ does not change under this modification\footnote{Note that, for $k\in[K]_0$ such that $m_k<m$, this modification makes the last $m-m_k$ segments with length $m_k$ out of the scope of the matching, although they should be considered. We calculate the scores for such segments and check whether they exceed the threshold, separately from our algorithm. We can reasonably assume that these additional calculations and checks take the negligible time, as far as the number of these exceptional segments, $m-m_k$, is much smaller than that of all the segment, $n-m_k+1$.}.

The typical orders of magnitudes of the parameters in PWM matching are as follows.
The sequence length $n$ can be of order $10^8$ (resp. $10^6-10^7$) and the number of PWMs $K$ may be of order $10^2$ (resp. $10^4$) for DNA (resp. protein) \cite{beckstette2006fast}.
The sequence motif length $m$ is typically about ten or several tens \cite{beckstette2006fast}, but motifs with lengths greater than $10^2$ are sometimes considered for protein \cite{shameer20093pfdb}.

Hereafter, we assume that entries of PWMs are bounded:
\begin{equation}
0\le M_k(i,a) \le 1 \label{eq:M01}
\end{equation}
for every $k\in[K]_0$, $i\in[m]_0$ and $a\in\mathcal{A}$.
This is just for the later convenience in using QMCI for score calculation.
Although this condition is not satisfied in general cases including Example \ref{example:PWM}, we can meet it by rescaling. 
That is, we redefine $M^\prime_k$ as $M_k$, where
\begin{equation}
    M_k^\prime(i,a) := \frac{M_k(i,a)-M_{\rm min}}{M_{\rm max}-M_{\rm min}}
\end{equation}
with
\begin{eqnarray}
    M_{\rm max} &:=& \max_{(k,i,a)\in[K]_0\times[m]_0\times\mathcal{A}} M_k(i,a), \nonumber \\
    M_{\rm min} &:=& \min_{(k,i,a)\in[K]_0\times[m]_0\times\mathcal{A}} M_k(i,a).
\end{eqnarray}
It is easy to see that after this redefinition Eq. (\ref{eq:M01}) holds.
We also need to replace the threshold $w_{\rm th}$ with
\begin{equation}
    w^\prime_{\rm th}:=\frac{w_{\rm th}-mM_{\rm min}}{M_{\rm max}-M_{\rm min}}.
\end{equation}
Note that the set (\ref{eq:thres}) is invariant under the above rescaling.

\subsection{Quantum algorithms}

Here, we briefly explain the building-block quantum algorithms for our PWM matching algorithm.

\subsubsection{Arithmetic on a quantum computer}

Before introducing the quantum algorithms, let us summarize the setup for quantum computation and the elementary operations we use in this paper.

We consider computation on the system consisting of the multiple quantum registers. 
We treat real numbers in fixed-point binary representation and, for each $x\in\mathbb{R}$, we denote by $\ket{x}$ the computational basis state on a quantum register where the bit string on the register corresponds to the binary representation of $x$.
We assume that each register has a sufficient number of qubits and thus the error from finite-precision representation is negligible.

We use the oracles for elementary arithmetic such as the adder $O_{\rm add}\ket{x}\ket{y}=\ket{x}\ket{x+y}$, the subtracter $O_{\rm sub}\ket{x}\ket{y}=\ket{x-y}\ket{y}$ and the multiplier $O_{\rm mul}\ket{x}\ket{y}=\ket{x}\ket{xy}$, where $x,y\in\mathbb{R}$.
Many proposals on implementations of such oracles have been made so far: see \cite{Munoz2022} and the references therein.

Besides, we also assume the availability of the following oracles.
The oracle $O_=$ checks whether two numbers are equal or not: for any $x,y\in\mathbb{R}$, $O_{=}\ket{x}\ket{y}\ket{0}=\ket{x}\ket{y}\left(\mathbbm{1}_{x=y}\ket{0}+\mathbbm{1}_{x\ne y}\ket{1}\right)$.
Also, the comparator $O_{\rm comp}$ acts as $O_{\rm comp}\ket{x}\ket{y}\ket{0}=\ket{x}\ket{y}\left(\mathbbm{1}_{x\ge y}\ket{1}+\mathbbm{1}_{x<y}\ket{0}\right)$ for any $x,y\in\mathbb{R}$.
These oracles can be implemented via subtraction.
To check $x=y$ or not, we may calculate $x-y$ and see whether it is 0 or not.
Therefore, we can implement $O_{=}$ by using a subtracter, followed by a multiple controlled-NOT (CNOT) gate activated if and only if all the bits of $x-y$ are 0, and at last a NOT gate.
Moreover, we can implement $O_{\rm comp}$ by combining a substacter with a CNOT gate activated if and only if the most significant bit of $x-y$ is 0; this is because, if we adopt 2’s complement method to represent negative numbers, the most significant bit represents the sign of a number \cite{koren2001}.

In addition, for any $N\in\mathbb{N}_{\rm 2}$, we assume the availability of the oracle $O^{\rm EqPr}_N$ that generates the equiprobable superposition of $\ket{0},\ket{1},...,\ket{N-1}$: $O^{\rm EqPr}_N\ket{0}=\frac{1}{\sqrt{N}}\sum_{i=0}^{N-1} \ket{i}$.
If $N=2^n$ with some $n\in N$, we can implement this oracle just by operating a Hadamard gate on each qubit of the $n$-qubit register.
If not, letting $n$ be $\lceil \log_2 N \rceil$, we can implement $O^{\rm EqPr}_N$ by the method in \cite{grover2002creating} to generate a state in which a given probability density function $p(x)$ is amplitude-encoded, with $p(x)$ defined on $[0,1]$ as $p(x)=\mathbbm{1}_{x\le N/2^n}$.

Lastly, we use the oracle $O^{\rm med}_N$ that outputs the median ${\rm med}(x_1,...,x_N)$ of any $N$ real numbers $x_1,...,x_N$, that is, $O^{\rm med}_N\ket{x_1}\cdots\ket{x_N}\ket{0}=\ket{x_1}\cdots\ket{x_N}\ket{{\rm med}(x_1,...,x_N)}$.
The implementations of this oracle have been discussed in \cite{Miyamoto2022}.

Hereafter, we collectively call the above oracles the arithmetic oracles.

\subsubsection{Quantum amplitude amplification}

The first building block for our PWM matching algorithm is QAA \cite{brassard2002}. 
Given the oracle to generate the superposition state $\ket{\Phi}$, QAA amplifies the amplitude of the ``marked state" in $\ket{\Phi}$ so that we can obtain it quadratically faster than naively iterating the process of generating $\ket{\Phi}$ and 
measurement on it. 
Here, we give the following theorem, which was presented in \cite{Miyamoto2022} 
as a slight modification of the original one in \cite{brassard2002}.

\begin{theorem}[Theorem 2 in \cite{Miyamoto2022}, originally Theorem 3 in \cite{brassard2002}]

    Suppose that we are given an access to an oracle $V$ that acts on the system $R$ consisting of a quantum register $R_1$ and a qubit $R_2$, as
    \begin{equation}
    V\ket{0}\ket{\bar{0}}=\sqrt{a}\ket{\phi_1}\ket{\bar{1}}+\sqrt{1-a}\ket{\phi_0}\ket{\bar{0}}=:\ket{\Phi}, \label{eq:V}
    \end{equation}
    where $\ket{\phi_1}$ and $\ket{\phi_0}$ are some quantum states on $R_1$ and $a\in[0,1)$.
    Then, for any $\gamma,\delta\in(0,1)$, there exists a quantum algorithm $\proc{QAA}(V,\gamma,\delta)$ that behaves as follows:
    \begin{itemize}
        \item The output of the algorithm is either of
        \begin{enumerate}
            \renewcommand{\labelenumi}{(\Alph{enumi})}
            \item the message ``success" and the quantum state $\ket{\phi_1}$
            \item the message ``failure"
        \end{enumerate}
        
        \item If $a\ge\gamma$, the algorithm outputs (A) with probability at least $1-\delta$, making $O\left(\frac{\log\delta^{-1}}{\sqrt{a}}\right)$ queries to $V$.
        
        \item If $0<a<\gamma$, the algorithm outputs either (A) or (B), making $O\left(\frac{\log\delta^{-1}}{\sqrt{\gamma}}\right)$ queries to $V$.

        \item If $a=0$, the algorithm certainly outputs (B), making $O\left(\frac{\log\delta^{-1}}{\sqrt{\gamma}}\right)$ queries to $V$\footnote{Although Theorem 2 in \cite{Miyamoto2022} does not mention the case that $a=0$, the statement on this case is proved in Appendix \ref{sec:casea0}}.
    \end{itemize}
    \label{th:QAA}
\end{theorem}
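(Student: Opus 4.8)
The plan is to follow the ``quantum search without knowing the success probability'' construction of Brassard et al., augmented with a cap on the iteration count set by $\gamma$ and a repetition level set by $\delta$. First I would introduce the amplitude amplification operator $Q := -V S_0 V^{-1} S_\chi$, where $S_\chi$ flips the sign of every state whose qubit $R_2$ is $\ket{\bar 1}$ and $S_0$ flips the sign of $\ket{0}\ket{\bar 0}$. Writing $a = \sin^2\theta$ with $\theta \in (0,\pi/2]$, the standard two-dimensional geometry in the plane spanned by $\ket{\phi_1}\ket{\bar 1}$ and $\ket{\phi_0}\ket{\bar 0}$ gives $Q^j\ket{\Phi} = \sin((2j+1)\theta)\ket{\phi_1}\ket{\bar 1} + \cos((2j+1)\theta)\ket{\phi_0}\ket{\bar 0}$, so measuring $R_2$ after applying $Q^j$ to $\ket{\Phi}=V\ket{0}\ket{\bar 0}$ yields $\ket{\bar 1}$, and hence the desired state $\ket{\phi_1}$ on $R_1$, with probability $\sin^2((2j+1)\theta)$.

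Second I would specify the algorithm. Fix a constant $\lambda\in(1,2)$, the cap $M_{\max} := \lceil 1/\sqrt{\gamma}\,\rceil$, and a repetition count $K_\delta := \lceil c\log\delta^{-1}\rceil$ with a suitable constant $c$. Sweeping over scales $m_\ell := \min(\lceil \lambda^\ell\rceil, M_{\max})$ for $\ell=0,1,2,\ldots$, at each scale $m$ I perform $K_\delta$ independent trials, each drawing $j$ uniformly at random from $\{0,\ldots,m-1\}$ (a classical coin flip), applying $Q^j V$ to $\ket{0}\ket{\bar 0}$, and measuring $R_2$; a single trial at scale $m$ costs $O(m)$ queries to $V$. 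The algorithm halts and outputs (A) together with the post-measurement state the first time any trial returns $\ket{\bar 1}$, and outputs (B) if every trial through scale $M_{\max}$ has failed.

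Third I would analyze the three regimes using the averaging identity $\frac{1}{m}\sum_{j=0}^{m-1}\sin^2((2j+1)\theta) = \frac12 - \frac{\sin(4m\theta)}{4m\sin(2\theta)}$, which is at least $1/4$ once $m\sin(2\theta)\ge 1$; since $\sin(2\theta)=2\sqrt{a(1-a)}$, the threshold scale is $O(1/\sqrt{a})$ in the relevant small-$a$ regime (and success is already immediate at $m=1$ when $a$ is large). When $a\ge\gamma$ this threshold does not exceed $M_{\max}$, so once the sweep reaches it each trial succeeds with probability at least $1/4$; choosing $c$ so that $(3/4)^{K_\delta}\le\delta$ forces the probability that all $K_\delta$ trials at that scale fail to be at most $\delta$, giving output (A) with probability at least $1-\delta$. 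Because the scales grow geometrically, the cumulative number of $V$-calls up to that scale is dominated by its final term, yielding $O(K_\delta/\sqrt{a}) = O(\log\delta^{-1}/\sqrt{a})$ queries. When $0<a<\gamma$ the sweep instead runs through every scale up to $M_{\max}$, costing $O(K_\delta M_{\max}) = O(\log\delta^{-1}/\sqrt{\gamma})$ queries, and the algorithm may legitimately return either (A) or (B), exactly as the statement permits.

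Finally, the degenerate case $a=0$ (the part deferred to the appendix) I would dispatch directly: here $\ket{\Phi}=\ket{\phi_0}\ket{\bar 0}$ lies entirely in the bad subspace, so $S_\chi\ket{\Phi}=\ket{\Phi}$, whence $V^{-1}S_\chi\ket{\Phi}=\ket{0}\ket{\bar 0}$, $S_0$ contributes a sign, and re-applying $V$ and the overall $-1$ gives $Q\ket{\Phi}=\ket{\Phi}$. Thus $Q^j\ket{\Phi}$ always has $R_2=\ket{\bar 0}$, no trial can ever produce $\ket{\bar 1}$, and the algorithm certainly outputs (B) after the full sweep of $O(\log\delta^{-1}/\sqrt{\gamma})$ queries. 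I expect the main obstacle to be bookkeeping rather than conceptual: establishing the averaging identity cleanly and, above all, verifying that the geometric-cap logic reproduces \emph{exactly} the stated query counts --- $O(\log\delta^{-1}/\sqrt{a})$ when $a\ge\gamma$ versus $O(\log\delta^{-1}/\sqrt{\gamma})$ otherwise --- uniformly and without hidden dependence on the unknown $a$ across all three regimes.
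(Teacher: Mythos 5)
Your proposal is correct and takes essentially the same route as the paper: your $a=0$ argument (showing $Q\ket{\Phi}=\ket{\Phi}$, so measuring $R_2$ can never yield $\ket{\bar 1}$ and (B) is output with certainty) is exactly the proof given in Appendix~\ref{sec:casea0}, while your capped geometric sweep with the averaging identity $\frac{1}{m}\sum_{j=0}^{m-1}\sin^2((2j+1)\theta)\ge\frac14$ for $m\sin(2\theta)\ge 1$ is precisely the Brassard et al.\ \proc{QSearch} construction, modified by the cap $\lceil 1/\sqrt{\gamma}\rceil$ and $O(\log\delta^{-1})$-fold boosting, that the paper outsources to \cite{Miyamoto2022,brassard2002} for the cases $a>0$. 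The only cosmetic difference is bookkeeping --- you place the $K_\delta$ repetitions inside each scale rather than repeating the capped sweep as a whole --- which affects none of the stated query bounds.
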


For the detailed procedure of $\proc{QAA}(V,\gamma,\delta)$ and the proof of Theorem \ref{th:QAA}, see \cite{Miyamoto2022} and the original paper \cite{brassard2002}.

\subsubsection{Quantum Monte Carlo integration}

QAA leads to quantum amplitude estimation (QAE) algorithm \cite{brassard2002}, 
which estimates the amplitude of the marked state; QAE is further extended to 
the quantum algorithm for estimating the expectation of a random variable \cite{Montanaro2015}, which we call QMCI in this paper.
This is the second building block.
There are various versions of QMCI for different situations.
For the PWM matching problem, we can use the following one, which assumes that the variable is bounded.

\begin{theorem}[Theorem 2.3 in \cite{Montanaro2015}, modified]
    Let $N\in\mathbb{N}$ and $\mathcal{X}$ be a set of $N$ real numbers $X_0,...,X_{N-1}$, each of which satisfies $0\le X_i\le1$.
    Suppose that we are given an oracle $O_X$ that acts as
    \begin{equation}
    O_{\mathcal{X}}\ket{i}\ket{0}=\ket{i}\ket{X_i}, \label{eq:OX}
    \end{equation}
    for any $i\in[N]_0$.
    Then, for any $\epsilon$ and $\delta$ in $(0,1)$, there is an oracle $O_{\mathcal{X},\epsilon,\delta}^{\rm mean}$ such that
    \begin{equation}
        O_{\mathcal{X},\epsilon,\delta}^{\rm mean}\ket{0}=\sum_{y\in \mathcal{Y}} \alpha_y\ket{y}, \label{eq:OmeanX}
    \end{equation}
    where some ancillary qubits are undisplayed.
    Here, $\mathcal{Y}$ is a finite set of real numbers that includes a subset $\tilde{\mathcal{Y}}$ consisting of $\epsilon$-approximations of the mean of $X_0,...,X_{N-1}$, 
    \begin{equation}
        \mu:=\frac{1}{N}\sum_{i=0}^{N-1}X_i, \label{eq:mu}
    \end{equation}
    and $\{\alpha_y\}_{y\in\mathcal{Y}}$ are complex numbers satisfying
    \begin{equation}
    \sum_{\tilde{y}\in\tilde{\mathcal{Y}}}|\alpha_{\tilde{y}}|^2\ge 1-\delta.
    \end{equation}
    In $O_{\mathcal{X},\epsilon,\delta}^{\rm mean}$,
    \begin{equation}
        O\left(\frac{1}{\epsilon}\log\left(\frac{1}{\delta}\right)\right) \label{eq:compQMCI}
    \end{equation}
    queries to $O_{\mathcal{X}}$ are made.
    \label{th:QMCI}
\end{theorem}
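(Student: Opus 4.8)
The plan is to follow the strategy of \cite{Montanaro2015}, which reduces mean estimation to quantum amplitude estimation (QAE) \cite{brassard2002}, and then to append a coherent median step so that the output has the state form of Eq.~(\ref{eq:OmeanX}) rather than a measured classical value. First I would build an amplitude-encoding operator $U$ acting on $\ket{0}\ket{0}\ket{0}$ as follows: apply $O^{\rm EqPr}_N$ to the first register to obtain $\frac{1}{\sqrt{N}}\sum_{i=0}^{N-1}\ket{i}$, apply $O_{\mathcal{X}}$ to load $\ket{i}\ket{X_i}$, and then apply a controlled rotation that, conditioned on $\ket{X_i}$, sends a fresh qubit to $\sqrt{X_i}\ket{1}+\sqrt{1-X_i}\ket{0}$ (this uses the hypothesis $0\le X_i\le 1$ and is built from the arithmetic oracles). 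The resulting state is $\frac{1}{\sqrt{N}}\sum_i \ket{i}\ket{X_i}\left(\sqrt{X_i}\ket{1}+\sqrt{1-X_i}\ket{0}\right)$, so the probability that the last qubit reads $\ket{1}$ is exactly $\mu=\frac{1}{N}\sum_i X_i$. Estimating $\mu$ is thereby reduced to estimating the amplitude of the marked ($\ket{1}$) subspace of $U\ket{0}$.

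Next I would apply QAE to $U$. Running the Grover-type iterate built from $U$ and the reflection about the marked subspace $t$ times, followed by the inverse quantum Fourier transform, yields a register holding an estimate $\tilde\mu$ obeying $|\tilde\mu-\mu|\le \frac{2\pi\sqrt{\mu(1-\mu)}}{t}+\frac{\pi^2}{t^2}$ with probability at least $8/\pi^2$. Since $\mu(1-\mu)\le\tfrac14$, choosing $t=O(1/\epsilon)$ makes the right-hand side at most $\epsilon$, so one QAE run produces a superposition $\sum_y \beta_y\ket{y}$ in which the total squared amplitude on $\epsilon$-approximations of $\mu$ is at least $8/\pi^2>\tfrac12$, using $O(1/\epsilon)$ queries to $U$ and hence to $O_{\mathcal{X}}$.

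Finally I would boost the success weight from the constant $8/\pi^2$ to $1-\delta$ by a coherent median-of-estimates step. I would run $L=O(\log\delta^{-1})$ independent copies of the QAE circuit on separate registers, obtaining the tensor product $\bigotimes_{j=1}^{L}\left(\sum_y \beta^{(j)}_y\ket{y}\right)$, and then apply $O^{\rm med}_L$ to write the median of the $L$ estimates into the output register, which defines $O^{\rm mean}_{\mathcal{X},\epsilon,\delta}$. Because the $L$ registers factorize, each joint basis term is labelled by the set $T\subseteq[L]$ of copies whose estimate is $\epsilon$-close to $\mu$, and its squared amplitude is $\prod_{j\in T}p_j\prod_{j\notin T}(1-p_j)$ with $p_j\ge 8/\pi^2$; whenever $|T|>L/2$ (taking $L$ odd) the median is itself an $\epsilon$-approximation of $\mu$. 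A Hoeffding bound then shows that the combined squared amplitude of the terms with a strict majority of good copies, and hence $\sum_{\tilde y\in\tilde{\mathcal{Y}}}|\alpha_{\tilde y}|^2$, is at least $1-\delta$ for $L=O(\log\delta^{-1})$. The total query count is the product $O(1/\epsilon)\times O(\log\delta^{-1})$, giving Eq.~(\ref{eq:compQMCI}).

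I expect the main obstacle to be the coherence requirement of the modification: unlike the classical-output statement of \cite{Montanaro2015}, here nothing may be measured, so the median trick must be realized as the unitary $O^{\rm med}_L$ acting on entangled estimate registers, and the Hoeffding concentration must be argued at the level of squared amplitudes rather than probabilities. The tensor-product structure of the independent QAE copies is what makes this go through, but verifying that the garbage ancillas of the separate QAE runs do not spoil the factorization, and that the ``good'' subspace is correctly identified after $O^{\rm med}_L$, is the delicate point.
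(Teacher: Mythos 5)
Your proposal is correct and follows essentially the same route as the paper: a QAE-based mean estimator with constant success weight $8/\pi^2$ (which the paper imports as Theorem 7 of \cite{Miyamoto2022} rather than rebuilding from the amplitude-encoding rotation $\sqrt{X_i}\ket{1}+\sqrt{1-X_i}\ket{0}$ and the Brassard et al.\ error bound, as you do), followed by the identical coherent boosting step --- a tensor product of $O(\log\delta^{-1})$ independent copies with $O^{\rm med}$ writing the median into a fresh register, yielding $O(\epsilon^{-1}\log\delta^{-1})$ queries. The only cosmetic difference is that you establish the majority-vote concentration directly via Hoeffding at the level of squared amplitudes, where the paper invokes the Jerrum--Valiant--Vazirani median lemma (its Lemma 2, with $J=12\lceil\log\delta^{-1}\rceil+1$); these are interchangeable, and your worry about the garbage registers is handled in the paper exactly as you suggest, by treating them as undisplayed ancillas that respect the tensor-product factorization.
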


The proof is presented in Appendix \ref{sec:PrThQMCI}, where the detailed way to construct the oracle $O_{\mathcal{X},\epsilon,\delta}^{\rm mean}$ is also shown.
Note that this theorem is slightly modified from the original one, Theorem 2.3 in \cite{Montanaro2015}, in some points.
First, our Theorem \ref{th:QMCI} is on the algorithm to calculate the average $\mu$ of a sequence and the sum, which is instantly obtained by multiplying the sequence size $N$ to the average. 
On the other hand, Theorem 2.3 in \cite{Montanaro2015} presents the algorithm to calculate the expectation of a random variable, and so calculation of the average and the sum is a special case.
This is sufficient for us, since we use QMCI to calculate the score of a segment in the sequence $S$, which is in fact the sum of the scores of the entries in the segment.
Second, although the algorithm in Theorem 2.3 in \cite{Montanaro2015} outputs the approximation of $\mu$, the above Theorem \ref{th:QMCI} mentions only generating the state in Eq. (\ref{eq:OmeanX}).
Although we obtain the approximation of $\mu$ by measuring the state, we do not do so in our PWM matching algorithm.
This is because our algorithm uses QMCI as a subroutine to calculate the score of each segment in the sequence $S$ in the high-score segment search by QAA.
This modification is similar to that in \cite{Miyamoto2022}, which also presents QMCI with no measurement.
However, QMCI in this paper is different from that in \cite{Miyamoto2022} too, since the former assumes that each $X_i$ is bounded but the latter assumes that the upper bound on the variance of the sequence is given.

\section{Quantum algorithm for PWM matching \label{sec:QPWMM}}

We now present the quantum algorithm for PWM matching.
The basic strategy is as follows: we calculate $\{w_{k,i}\}_{k,i}$ for all the pairs $(k,i)\in\mathcal{P}_{\rm all}$ parallelly in a quantum superposition and find the pairs with high scores by QAA.
We present the two versions of the quantum algorithm, the {\it naive iteration method} and the {\it QMCI-based method}, whose difference is how to calculate $w_{k,i}$.

\subsection{Assumption on the quantum accesses to the sequence and the PWMs}

Before presenting the quantum algorithm, we need to make some assumptions on the available oracles.
First, for score calculation on a quantum computer, we need to load the entries in the sequence $S$ and the PWMs $\{M_k\}_k$ onto 
quantum registers.
This is formally stated as follows.

\begin{assumption}
We have accesses to the following oracles:
\begin{itemize}
    \item $O_{\rm seq}$: for any $i\in[n]_0$,
    \begin{equation}
        O_{\rm seq}\ket{i}\ket{0}=\ket{i}\ket{s_i}. \label{eq:Oseq}
    \end{equation}

    \item $O_{\rm PWM}$: for any $(k,i,a)\in[K]_0\times[m]_0\times\mathcal{A}$,
    \begin{equation}
        O_{\rm PWM}\ket{k}\ket{i}\ket{a}\ket{0}=\ket{k}\ket{i}\ket{a}\ket{M_k(i,a)}. \label{eq:OPWM}
    \end{equation}
\end{itemize}
\label{ass:oracle}
\end{assumption}

Here and hereafter, $\ket{a}$ with $a\in\mathcal{A}$ is regarded as the computational basis state corresponding to the integer that labels $a$.
We can implement these oracles if QRAM \cite{Giovannetti2008QRAM} is available, but non-negligible preprocessing cost is needed.
We will discuss this point in Sec. \ref{sec:QRAM}.

Besides, we assume that we can use the oracle that determines whether a given index pair $(k,i)\in\mathcal{P}_{\rm all}$ is in a given subset 
$\mathcal{P}\subset\mathcal{P}_{\rm all}$ or not.

\begin{assumption}
    For any 
    subset $\mathcal{P}\subset\mathcal{P}_{\rm all}$, we have an access to the oracle $O_{\mathcal{P}}$ that acts as
    \begin{equation}
        O_{\mathcal{P}}\ket{k}\ket{i}\ket{0}=\ket{k}\ket{i}\left(\mathbbm{1}_{(k,i)\in\mathcal{P}}\ket{0}+\mathbbm{1}_{(k,i)\notin\mathcal{P}}\ket{1}\right)
    \end{equation}
    for any $(k,i)\in\mathcal{P}_{\rm all}$. \label{ass:OP}
\end{assumption}

We can also implement this oracle using QRAM as discussed in Sec. \ref{sec:QRAM}.

Since $O_{\rm seq}$, $O_{\rm PWM}$ and $O_{\mathcal{P}}$ are supposed to be realized by QRAM, we hereafter consider the number of queries to them as a metric of the complexity of our algorithm.

\subsection{Algorithm \rom{1}: the naive iteration method}

We then explain the first algorithm, the naive iteration method.

We can calculate $w_{k,i}$ by naively iterating the queries to $O_{\rm seq}$ and $O_{\rm PWM}$ and additions as Procedure \ref{proc:naive}.

\ \\

\SetAlgorithmName{\proceduresname}{}{}
\begin{algorithm}[H]
    \KwIn{$k\in[K]_0,i\in[m]_0$}
    
    Prepare the quantum registers $R_1,...,R_7$ and initialize $R_1$, $R_2$, $R_4$ and the others to $\ket{k}$, $\ket{i}$, $\ket{i}$ and $\ket{0}$, respectively.\\

    \For{$j=0,...,m-1$}{

    Set $R_5$ to $\ket{s_l}$ by $O_{\rm seq}$ with the value on $R_4$ being $l$.\\

    Set $R_6$ to $\ket{M_k(l,a)}$ by $O_{\rm PWM}$ with the values on $R_1$, $R_3$ and $R_5$ being $k$, $l$ and $a$, respectively.\\

    Using $O_{\rm add}$, add the value on $R_6$ to the value on $R_7$.\\

    \If{$j<m-1$}{

    Reset $R_5$ and $R_6$ to $\ket{0}$ using the inverses of $O_{\rm seq}$ and $O_{\rm PWM}$, respectively.\\

    Increment the values on $R_3$ and $R_4$ by 1, using $O_{\rm add}$.\\
    }
    }
    
    \caption{calculate $w_{k,i}$ by naive iteration}\label{proc:naive}
\end{algorithm}

\ \\

In this procedure, the quantum state is transformed as follows:
\begin{eqnarray}    
    &&\ket{k}\ket{i}\ket{0}\ket{i}\ket{0}\ket{0}\ket{0} \nonumber \\
    &\xrightarrow{3}& \ket{k}\ket{i}\ket{0}\ket{i}\ket{s_i}\ket{0}\ket{0} \nonumber \\
    &\xrightarrow{4}& \ket{k}\ket{i}\ket{0}\ket{i}\ket{s_i}\ket{M_k(0,s_i)}\ket{0} \nonumber \\
    &\xrightarrow{5}& \ket{k}\ket{i}\ket{0}\ket{i}\ket{s_i}\ket{M_k(0,s_i)}\ket{M_k(0,s_i)} \nonumber \\
    &\xrightarrow{7}& \ket{k}\ket{i}\ket{0}\ket{i}\ket{0}\ket{0}\ket{M_k(0,s_i)} \nonumber \\
    &\xrightarrow{8}& \ket{k}\ket{i}\ket{1}\ket{i+1}\ket{0}\ket{0}\ket{M_k(0,s_i)} \nonumber \\
    &\xrightarrow{3}& \ket{k}\ket{i}\ket{1}\ket{i+1}\ket{s_{i+1}}\ket{0}\ket{M_k(0,s_i)} \nonumber \\
    &\xrightarrow{4}& \ket{k}\ket{i}\ket{1}\ket{i+1}\ket{s_{i+1}}\ket{M_k(1,s_{i+1})}\ket{M_k(0,s_i)} \nonumber \\
    &\xrightarrow{5}& \ket{k}\ket{i}\ket{1}\ket{i+1}\ket{s_{i+1}}\ket{M_k(1,s_{i+1})}\Ket{\sum_{j=0}^1M_k(j,s_{i+j})} \nonumber \\
    &\xrightarrow{7}& \ket{k}\ket{i}\ket{1}\ket{i+1}\ket{0}\ket{0}\Ket{\sum_{j=0}^1M_k(j,s_{i+j})} \nonumber \\
    &\rightarrow& ... \nonumber \\
    &\xrightarrow{5}& \ket{k}\ket{i}\ket{m-1}\ket{i+m-1}\ket{s_{i+m-1}}\ket{M_k(m-1,s_{i+m-1})} \nonumber \\
    && \quad \otimes \ \Bigg|\underbrace{\sum_{j=0}^{m-1}M_k(j,s_{i+j})}_{=w_{k,i}}\Bigg\rangle. \label{eq:Oscit}
\end{eqnarray}
Here, the numbers on the arrows correspond to the steps in Procedure \ref{proc:naive}.
We denote by $O_{\rm sc,it}$ the quantum circuit for the above operation.

Then, using $O_{\rm sc,it}$, we can construct the quantum algorithm to find the high-score segments.

\begin{theorem}
Consider Problem \ref{prob:PWMM} under Assumptions \ref{ass:oracle} and \ref{ass:OP}.
Suppose that we are given $\delta\in(0,1)$.
Then, there exists a quantum algorithm that behaves as follows.
\begin{enumerate}
    \renewcommand{\labelenumi}{(\roman{enumi})}
    \item If $n_{\rm sol}:=|\mathcal{P}_{\rm sol}|>0$, the algorithm outputs all the elements in $\mathcal{P}_{\rm sol}$ with probability at least $1-\delta$, making
    \begin{equation}
        O\left(m\sqrt{Knn_{\rm sol}}\log\left(\frac{Kn}{\delta}\right)\right) \label{eq:compNaiveOseq}
    \end{equation}
    queries to $O_{\rm seq}$ and $O_{\rm PWM}$, and
    \begin{equation}
        O\left(\sqrt{Knn_{\rm sol}}\log\left(\frac{Kn}{\delta}\right)\right) \label{eq:compNaiveOP}
    \end{equation}
    queries to $O_{\mathcal{P}}$ with $\mathcal{P}$ being some subsets in $\mathcal{P}_{\rm all}$.

    \item If $\mathcal{P}_{\rm sol}$ is empty, the algorithm certainly outputs the message ``no match", making
    \begin{equation}
        O\left(m\sqrt{Kn}\log\left(\frac{Kn}{\delta}\right)\right) \label{eq:compNaiveOseqNo}
    \end{equation}
    queries to $O_{\rm seq}$ and $O_{\rm PWM}$, and
    \begin{equation}
        O\left(\sqrt{Kn}\log\left(\frac{Kn}{\delta}\right)\right) \label{eq:compNaiveOPNo}
    \end{equation}
    queries to $O_{\mathcal{P}}$ with $\mathcal{P}$ being some subsets in $\mathcal{P}_{\rm all}$.
\end{enumerate}

\label{th:NaiveIter}
\end{theorem}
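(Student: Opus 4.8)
The plan is to build the high-score-segment search on top of $\proc{QAA}$ (Theorem~\ref{th:QAA}), locating the elements of $\mathcal{P}_{\rm sol}$ one at a time while excluding those already found. First I would assemble a state-preparation oracle $V$ of the form~\eqref{eq:V}, acting on a register that holds a pair $(k,i)$ together with the work space of Procedure~\ref{proc:naive}, plus one flag qubit. Starting from $\ket{0}\ket{\bar 0}$, apply $O^{\rm EqPr}_K\otimes O^{\rm EqPr}_{n-m+1}$ to build the equal superposition over all $(k,i)\in\mathcal{P}_{\rm all}$, apply $O_{\rm sc,it}$ to write $w_{k,i}$ into an ancilla, apply the comparator $O_{\rm comp}$ against $w_{\rm th}$, and apply $O_{\mathcal{P}}$ with $\mathcal{P}=\mathcal{P}_{\rm found}$ the set of already-output solutions; a single Toffoli then sets the flag to $\ket{\bar 1}$ exactly when $w_{k,i}\ge w_{\rm th}$ and $(k,i)\notin\mathcal{P}_{\rm found}$. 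By construction the marked amplitude is $a=|\mathcal{P}_{\rm sol}\setminus\mathcal{P}_{\rm found}|/|\mathcal{P}_{\rm all}|$, and one call to $V$ costs $O(m)$ queries each to $O_{\rm seq}$ and $O_{\rm PWM}$ (inside $O_{\rm sc,it}$) and one query to $O_{\mathcal{P}}$. To guarantee $a<1$ as Theorem~\ref{th:QAA} requires, I would pad the index set with a single non-solution index, which leaves all scalings unchanged.

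The key structural observation is that $a$ never lands in the ambiguous regime $(0,\gamma)$ once we set $\gamma:=1/|\mathcal{P}_{\rm all}|$: whenever an unfound solution remains we have $a\ge 1/|\mathcal{P}_{\rm all}|=\gamma$, and when none remains $a=0$. Hence I would run $\proc{QAA}(V,\gamma,\delta')$ repeatedly; by Theorem~\ref{th:QAA} each round either returns (A) with a state $\ket{\phi_1}$ supported entirely on marked pairs---measuring it then yields a genuinely new solution, which I add to $\mathcal{P}_{\rm found}$---or returns (B). Because the case $a=0$ deterministically returns (B), the first (B) reliably signals that $\mathcal{P}_{\rm found}=\mathcal{P}_{\rm sol}$, at which point I output $\mathcal{P}_{\rm found}$ (or the message ``no match'' if it is empty). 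Since each (A)-round removes exactly one element, there are at most $n_{\rm sol}$ (A)-rounds followed by one (B)-round.

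For the error and complexity, set $\delta':=\delta/(|\mathcal{P}_{\rm all}|+1)$, a value that does not require knowing $n_{\rm sol}$ in advance. Conditioned on all earlier rounds succeeding, round $t$ has amplitude $a_t=(n_{\rm sol}-t+1)/|\mathcal{P}_{\rm all}|\ge\gamma$ and hence returns (A) with probability $\ge 1-\delta'$; a union bound over the $\le n_{\rm sol}$ such rounds gives overall success probability $\ge 1-n_{\rm sol}\delta'\ge 1-\delta$. For the query count, the crucial estimate is
\begin{equation}
\sum_{t=1}^{n_{\rm sol}}\frac{1}{\sqrt{a_t}}=\sqrt{|\mathcal{P}_{\rm all}|}\sum_{l=1}^{n_{\rm sol}}\frac{1}{\sqrt{l}}=O\!\left(\sqrt{|\mathcal{P}_{\rm all}|\,n_{\rm sol}}\right),
\end{equation}
which together with the final (B)-round cost $O(1/\sqrt{\gamma})=O(\sqrt{|\mathcal{P}_{\rm all}|})$ and $\log(1/\delta')=O(\log(Kn/\delta))$ bounds the total number of $V$-calls by $O(\sqrt{|\mathcal{P}_{\rm all}|\,n_{\rm sol}}\log(Kn/\delta))$. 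Substituting $|\mathcal{P}_{\rm all}|=K(n-m+1)=O(Kn)$ and multiplying by the per-$V$ costs $O(m)$ and $1$ yields \eqref{eq:compNaiveOseq} and \eqref{eq:compNaiveOP}. The empty case consists of the single deterministic (B)-round with $a=0$, giving \eqref{eq:compNaiveOseqNo} and \eqref{eq:compNaiveOPNo}.

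I expect the main obstacle to be the bookkeeping of the iterated search rather than any single hard step: one must verify that the amplitude stays out of $(0,\gamma)$ so that termination is reliable, that the union bound can be taken with a fixed $\delta'$ over an a priori unknown number of rounds without inflating the logarithmic factor beyond $\log(Kn/\delta)$, and that the harmonic-type sum $\sum_l l^{-1/2}$ collapses the naive $n_{\rm sol}/\sqrt{\cdot}$ bound to the desired $\sqrt{n_{\rm sol}}$ scaling. The remaining ingredients---correctness of $O_{\rm sc,it}$ from the state tracing in~\eqref{eq:Oscit}, and the implementability of $O_{\rm comp}$ and the marking Toffoli---are routine.
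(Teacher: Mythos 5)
Your proposal is correct and follows essentially the same route as the paper's own proof: the same state-preparation oracle (equal superposition via $O^{\rm EqPr}$, score computation by $O_{\rm sc,it}$, comparison against $w_{\rm th}$, exclusion via $O_{\mathcal{P}}$, and a Toffoli-set flag), the same iterated $\proc{QAA}$ with threshold $\gamma\sim 1/(Kn)$ and per-round failure budget $\delta'\sim\delta/(Kn)$, and the same harmonic-type sum $\sum_l l^{-1/2}$ yielding the $\sqrt{n_{\rm sol}}$ scaling, with the $a=0$ case guaranteeing deterministic termination. Your only departures are cosmetic refinements the paper glosses over (a union bound in place of the product bound $(1-\delta/Kn)^{n_{\rm sol}}\ge 1-\delta$, and padding the index set to enforce $a<1$ as Theorem \ref{th:QAA} formally requires).
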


\begin{proof}
First, note that we can perform the following operation for any subset $\mathcal{P}\subset\mathcal{P}_{\rm all}$:
\begin{widetext}
\begin{eqnarray}
    &&\ket{0}\ket{0}\ket{0}\ket{0}\ket{0}\ket{0}\ket{0} \nonumber \\
    &\rightarrow& \frac{1}{\sqrt{Kn}}\sum_{k=0}^{K-1}\sum_{i=0}^{n-1}\ket{k}\ket{i}\ket{0}\ket{0}\ket{0}\ket{0}\ket{0}, \nonumber \\
    &\rightarrow& \frac{1}{\sqrt{Kn}}\sum_{k=0}^{K-1}\sum_{i=0}^{n-1}\ket{k}\ket{i}\ket{w_{k,i}}\ket{0}\ket{0}\ket{0}\ket{0} \nonumber \\
    &\rightarrow& \frac{1}{\sqrt{Kn}}\sum_{k=0}^{K-1}\sum_{i=0}^{n-1}\ket{k}\ket{i}\ket{w_{k,i}}\ket{w_{\rm th}}\ket{0}\ket{0}\ket{0} \nonumber \\
    &\rightarrow& \frac{1}{\sqrt{Kn}}\sum_{k=0}^{K-1}\sum_{i=0}^{n-1}\ket{k}\ket{i}\ket{w_{k,i}}\ket{w_{\rm th}}\left(\mathbbm{1}_{w_{k,i}\ge w_{\rm th}}\ket{1}+\mathbbm{1}_{w_{k,i}< w_{\rm th}}\ket{0}\right)\ket{0}\ket{0} \nonumber \\
    &\rightarrow& \frac{1}{\sqrt{Kn}}\sum_{k=0}^{K-1}\sum_{i=0}^{n-1}\ket{k}\ket{i}\ket{w_{k,i}}\ket{w_{\rm th}}\left(\mathbbm{1}_{w_{k,i}\ge w_{\rm th}}\ket{1}+\mathbbm{1}_{w_{k,i}< w_{\rm th}}\ket{0}\right)\left(\mathbbm{1}_{(k,i)\in\mathcal{P}}\ket{0}+\mathbbm{1}_{(k,i)\notin\mathcal{P}}\ket{1}\right)\ket{0} \nonumber \\
    &\rightarrow& \frac{1}{\sqrt{Kn}}\sum_{k=0}^{K-1}\sum_{i=0}^{n-1}\ket{k}\ket{i}\ket{w_{k,i}}\ket{w_{\rm th}} \otimes \nonumber \\
    &&\qquad\qquad\qquad\left(\mathbbm{1}_{w_{k,i}\ge w_{\rm th} \ \wedge \ (k,i)\notin\mathcal{P}}\ket{1}\ket{1}\ket{1}+ \mathbbm{1}_{w_{k,i}\ge w_{\rm th} \ \wedge \ (k,i)\in\mathcal{P}}\ket{1}\ket{0}\ket{0}+ \mathbbm{1}_{w_{k,i}< w_{\rm th} \ \wedge \ (k,i)\notin\mathcal{P}}\ket{0}\ket{1}\ket{0}+ \mathbbm{1}_{w_{k,i}< w_{\rm th} \ \wedge \ (k,i)\in\mathcal{P}}\ket{0}\ket{0}\ket{0}\right)\nonumber \\
    &=:& \sqrt{\frac{\left|\mathcal{P}_{\rm sol}\cap\overline{\mathcal{P}}\right|}{Kn}}\ket{\psi_{\mathcal{P},1}}\ket{1}+\sqrt{\frac{\left|\overline{\mathcal{P}_{\rm sol}}\cup\mathcal{P}\right|}{Kn}}\ket{\psi_{\mathcal{P},0}}\ket{0}=:\ket{\Psi_\mathcal{P}}
    \label{eq:OPscit}
\end{eqnarray}
\end{widetext}
Here, the seven kets correspond to the seven quantum registers, among which the first four ones have a sufficient number of qubits and the last three ones are single-qubit.
The complement of a set is determined with the universal set being $\mathcal{P}_{\rm all}$.
\begin{equation}
    \ket{\psi_{\mathcal{P},1}}:=\frac{1}{\sqrt{\left|\mathcal{P}_{\rm sol}\cap\overline{\mathcal{P}}\right|}}\sum_{(k,i)\in \mathcal{P}_{\rm sol}\cap\overline{\mathcal{P}}}\ket{k}\ket{i}\ket{w_{k,i}}\ket{w_{\rm th}}\ket{1}\ket{1}
\end{equation}
is the quantum state on the system consisting of the first to sixth registers, and $\ket{\psi_{\mathcal{P},0}}$ is another state on the same system.
In Eq. (\ref{eq:OPscit}), $O^{\rm EqPr}_K$ and $O^{\rm EqPr}_n$ are used at the first arrow.
At the second arrow, $O_{\rm sc,it}$ is used with the register $R_3,...,R_6$ in Procedure \ref{proc:naive} undisplayed.
At the third arrow, we just set $w_{\rm th}$ on the fourth register.
The fourth and fifth transformations are done by $O_{\rm comp}$ and $O_{\mathcal{P}}$, respectively.
Then, the last transformation is done by a Toffoli gate on the last three qubits.
We denote by $\tilde{O}^{\mathcal{P}}_{\rm sc,it}$ the oracle for the operation in Eq. (\ref{eq:OPscit}).
Note that $\tilde{O}^{\mathcal{P}}_{\rm sc,it}$ contains one call to $O_\mathcal{P}$ and $m$ calls to $O_{\rm seq}$ and $O_{\rm PWM}$, since $O_{\rm sc,it}$ makes $O(m)$ calls to them.

Then, the naive iteration method is presented as Algorithm \ref{alg:Naive}.

\ \\

\SetAlgorithmName{\algorithmsname}{}{}
\begin{algorithm}[H]
    \KwIn{$\delta\in(0,1)$}

    Set $\mathcal{P}_{\rm temp}=\emptyset$ and ${\rm StopFlg}=0$\\
    
    \While{${\rm StopFlg}=0$}{
        Perform $\proc{QAA}\left(\tilde{O}^{\mathcal{P}_{\rm temp}}_{\rm sc,it},\frac{1}{Kn},\frac{\delta}{Kn}\right)$.\\

        \uIf{The output message is ``success"}{
            Measure the first and second registers in the output state $\ket{\psi_{\mathcal{P}_{\rm temp},1}}$.
            \\

            Add the measurement outcome $(k,i)$ to $\mathcal{P}_{\rm temp}$.\\
        }
        \Else{
            Set ${\rm StopFlg}=1$
        }
    }

    \uIf{$\mathcal{P}_{\rm temp}\ne \emptyset$}{
        Output $\mathcal{P}_{\rm temp}$.
    }
    \Else{
        Output "no match".
    }
    
    \caption{the naive iteration method}\label{alg:Naive}
\end{algorithm}

\ \\

Let us consider the case that $n_{\rm sol}\ge1$.
In this case, we run QAA repeatedly, and, if each run finishes with the message ``success", we obtain an element in $\mathcal{P}_{\rm sol}\setminus\mathcal{P}_{\rm temp}$, that is, an element in $\mathcal{P}_{\rm sol}$ that has not been obtained in the previous runs yet.
Thus, if QAAs finish with ``success" $n_{\rm sol}$ times in a row, we obtain all the $n_{\rm sol}$ elements in $\mathcal{P}_{\rm sol}$.
This happens with probability at least
\begin{equation}
    \left(1-\frac{\delta}{Kn}\right)^{n_{\rm sol}}\ge \left(1-\frac{\delta}{Kn}\right)^{Kn}\ge 1-\delta,
\end{equation}
since each QAA finishes with the message ``success" with probability at least $1-\frac{\delta}{Kn}$, according to Theorem \ref{th:QAA}.
In the $l$th QAA, at which the $n_{\rm sol}-l+1$ elements in $\mathcal{P}_{\rm sol}$ remain not to be obtained, the number of the queries to $\tilde{O}^{\mathcal{P}_{\rm temp}}_{\rm sc,it}$ is
\begin{equation}
    O\left(\sqrt{\frac{Kn}{n_{\rm sol}-l+1}}\log\left(\frac{Kn}{\delta}\right)\right) \label{eq:OIterQAA}
\end{equation}
according to Theorem \ref{th:QAA}, since the amplitude of $\ket{\psi_{\mathcal{P}_{\rm temp},1}}$ in $\Ket{\Psi_{\mathcal{P}_{\rm temp}}}$ is
\begin{equation}
    \sqrt{\frac{\left|\mathcal{P}_{\rm sol}\cap \overline{\mathcal{P}_{\rm temp}}\right|}{Kn}}=\sqrt{\frac{n_{\rm sol}-l+1}{Kn}}.
\end{equation}
Thus, in this step, the number of the queries to $O_{\rm seq}$ and $O_{\rm PWM}$ is
\begin{equation}
    O\left(m\sqrt{\frac{Kn}{n_{\rm sol}-l+1}}\log\left(\frac{Kn}{\delta}\right)\right),
\end{equation}
since $\tilde{O}^{\mathcal{P}_{\rm temp}}_{\rm sc,it}$ contains $O(m)$ calls to them, and that of the queries to $O_{\mathcal{P}}$ is of order (\ref{eq:OIterQAA}), since $\tilde{O}^{\mathcal{P}_{\rm temp}}_{\rm sc,it}$ calls it once with $\mathcal{P}$ being $\mathcal{P}_{\rm temp}$.
Therefore, for $O_{\mathcal{P}}$, the total query number in the series of QAAs is
\begin{equation}
    O\left(\sum_{l=1}^{n_{\rm sol}}\sqrt{\frac{Kn}{n_{\rm sol}-l+1}}\log\left(\frac{Kn}{\delta}\right)\right),
\end{equation}
which turns into Eq. (\ref{eq:compNaiveOP}) by simple algebra, and that for $O_{\rm seq}$ and $O_{\rm PWM}$ is this times $m$, that is, Eq. (\ref{eq:compNaiveOseq}).
After $n_{\rm sol}$ QAAs with ``success", we run another QAA that outputs ``failure" and end the algorithm, since now $\mathcal{P}_{\rm sol}\cap\overline{\mathcal{P}_{\rm temp}}$ is empty and the amplitude of $\ket{\psi_{\mathcal{P}_{\rm temp},1}}$ in $\Ket{\Psi_{\mathcal{P}_{\rm temp}}}$ is 0.
In this last QAA, the query number for $O_{\rm seq}$ and $O_{\rm PWM}$ is of order (\ref{eq:compNaiveOseqNo}) and that for $O_{\mathcal{P}}$ is of order (\ref{eq:compNaiveOPNo}), according to Theorem \ref{th:QAA}, but the total query number in the algorithm remains of order (\ref{eq:compNaiveOseq}) and (\ref{eq:compNaiveOP}).

In the case that $n_{\rm sol}=0$, the first QAA outputs ``failure" and then the algorithm ends.
The query number in this is of same order as that in the last QAA in the case that $n_{\rm sol}>0$, that is, Eqs. (\ref{eq:compNaiveOseqNo}) and (\ref{eq:compNaiveOPNo}).

\end{proof}

Let us comment on the number of qubits used in the naive iteration method.
As we see in Eqs. (\ref{eq:Oscit}) and (\ref{eq:OPscit}), this algorithm uses several quantum registers to represent the indexes for PWMs, positions in a sequence, and positions in a segment.
The sufficient qubit numbers in these types of registers are $O(\log K)$, $O(\log n)$ and $O(\log m)$, respectively, and thus the total qubit number is $O(\log K + \log n + \log m)$, logarithmic on the parameters that characterize the problem.
Besides, the algorithm uses a few registers to represent real numbers such as an entry $M_k(j,s)$ of a PWM and a segment score $w_{k,i}$.
If we take some typical setting for binary representation of real numbers (say, double precision with 64 bits) independently of the problem, a few registers for real numbers amount qubits of order $10^2$, which surpasses the qubits for the indexes for typical values of the parameters $K,n$ and $m$ mentioned in Sec. \ref{sec:PWM}.
In summary, for a typical PWM matching problem, the qubit number used in the naive iteration method is of order $10^2$.
This also applies to the QMCI-based method, which is explained in Sec. \ref{sec:QMCIBaseMethod}.

\subsection{Algorithm \rom{2}: QMCI-based method \label{sec:QMCIBaseMethod}}

Next, we present the QMCI-based method.
It is basically same as the naive iteration method, but it calculate the score of each segment by QMCI.
Although in the naive iteration method we calculate the score of one segment, which is a sum of the $m$ position-wise scores, calling $O_{\rm seq}$ and $O_{\rm PWM}$ $O(m)$ times, this query number can be reduced by QMCI, whose complexity depends on the required accuracy, if we can set it sufficiently loose.

This method is inspired by the algorithm for GW matchied filtering presented in \cite{Miyamoto2022}.
It is the two-fold problem of calculating the quantity called SNR, which is given as the sum of many terms, for each template waveform of the GW signal, and searching the high-SNR templates.
In this regard, it has a same structure as PWM matching, which consists of calculating the scores of the segments and searching the high-score segments.
Therefore, we naturally conceive the idea to apply the algorithm in \cite{Miyamoto2022}, which is a combination of QMCI and QAA, to PWM matching.

As pointed out in \cite{Miyamoto2022}, there is an issue in using QMCI.
The result of QMCI inevitably contains the error, and it can cause the {\it false match}.
That is, even if $w_{k,i}$ for some $(k,i)\in\mathcal{P}_{\rm all}$ is smaller than the threshold $w_{\rm th}$, the estimation of it by QMCI might exceed $w_{\rm th}$ due to the error, and we might misjudge that the $i$th segment matches with $M_k$.
To cope with this, we set the threshold in the similar way to \cite{Miyamoto2022}.
That is, we set the two levels of the threshold, $w_{\rm soft}$ and $w_{\rm soft}$, which have the following meanings:
\begin{itemize}
    \item We want to find $(k,i)\in\mathcal{P}_{\rm all}$ such that $w_{k,i}\ge w_{\rm hard}$ with high probability.

    \item We never want to falsely find $(k,i)$ such that $w_{k,i}< w_{\rm soft}$.

    \item If there are $(k,i)$ such that $w_{\rm soft}\le w_{k,i}< w_{\rm hard}$, it is not necessary but fine to find them.
\end{itemize}
Then, we set the accuracy of QMCI to $\frac{w_{\rm hard}-w_{\rm soft}}{2}$ and take the following policy: if the estimation of $w_{k,i}$ by QMCI is larger than or equal to
\begin{equation}
    w_{\rm mid}:=\frac{w_{\rm soft}+w_{\rm hard}}{2}, \label{eq:wmid}    
\end{equation}
we judge $(k,i)$ as ``matched", and if not, we judge as ``mismatched".
Under this policy, $(k,i)$ is judged as ``matched" if $w_{k,i}\ge w_{\rm hard}$ and ``mismatched" if $w_{k,i}< w_{\rm soft}$ with high probability. 
We will discuss the validity to assume that such two threshold levels are set in Sec. \ref{sec:ScoreTol}. 

Now, let us present the theorem on the QMCI-based method.

\begin{theorem}
Consider Problem \ref{prob:PWMM} under Assumptions \ref{ass:oracle} and \ref{ass:OP}.
Suppose that we are given $\delta\in(0,1)$ and $w_{\rm soft},w_{\rm hard}\in\mathbb{R}$ such that $0<w_{\rm soft}<w_{\rm hard}<m$. 
Define
\begin{equation}
    \mathcal{P}_{\rm hard} := \{(k,i)\in\mathcal{P}_{\rm all} \ | \ w_{k,i}\ge w_{\rm hard}\}
\end{equation}
and
\begin{equation}
    \mathcal{P}_{\rm soft} := \{(k,i)\in\mathcal{P}_{\rm all} \ | w_{k,i} \ge w_{\rm soft}\}.
\end{equation}
Then, there is a quantum algorithm that makes queries to $O_{\rm seq}$, $O_{\rm PWM}$ and $O_{\mathcal{P}}$ with $\mathcal{P}$ being some subsets in $\mathcal{P}_{\rm all}$, and behaves as follows:
\begin{enumerate}
    \renewcommand{\labelenumi}{(\roman{enumi})}
    \item If $n_{\rm hard}:=|\mathcal{P}_{\rm hard}|>0$, the algorithm outputs all the elements in $\mathcal{P}_{\rm hard}$ and 0 or more elements in $\mathcal{P}_{\rm soft}\setminus\mathcal{P}_{\rm hard}$ with probability at least $1-\delta$.
    In the algorithm, $O_{\rm seq}$ and $O_{\rm PWM}$ are called
    \begin{equation}
        O\left(\frac{mn_{\rm soft}\sqrt{Kn}}{w_{\rm hard}-w_{\rm soft}}\log\left(\frac{K^2n^2}{\delta}\right)\log\left(\frac{Kn}{\delta}\right)\right) \label{eq:compQMCIOseq1}
    \end{equation}
    times, and $O_{\mathcal{P}}$ are called
    \begin{equation}
        O\left(n_{\rm soft}\sqrt{Kn}\log\left(\frac{K^2n^2}{\delta}\right)\log\left(\frac{Kn}{\delta}\right)\right)  \label{eq:compQMCIOP1}
    \end{equation}
    times, where $n_{\rm soft}:=|\mathcal{P}_{\rm soft}|$.

    \item If $n_{\rm soft}=0$, the algorithm certainly outputs the message ``no match".
    In the algorithm, $O_{\rm seq}$ and $O_{\rm PWM}$ are called
    \begin{equation}
        O\left(\frac{m\sqrt{Kn}}{w_{\rm hard}-w_{\rm soft}}\log\left(\frac{K^2n^2}{\delta}\right)\log\left(\frac{Kn}{\delta}\right)\right) \label{eq:compQMCIOseq2}
    \end{equation}
    times, and $O_{\mathcal{P}}$ are called
    \begin{equation}
        O\left(\sqrt{Kn}\log\left(\frac{K^2n^2}{\delta}\right)\log\left(\frac{Kn}{\delta}\right)\right) \label{eq:compQMCIOP2}
    \end{equation}
    times.

    \item If $n_{\rm soft}>0$ and $n_{\rm hard}=0$, the algorithm certainly outputs the message ``no match" or 1 or more elements in $\mathcal{P}_{\rm soft}$.
    In the algorithm, the number of queries to $O_{\rm seq}$ and $O_{\rm PWM}$ is of order (\ref{eq:compQMCIOseq1}), and the number of queries to $O_{\mathcal{P}}$ is of order (\ref{eq:compQMCIOP1}).
\end{enumerate}

\label{th:QMCIBased}
\end{theorem}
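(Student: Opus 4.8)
The plan is to reuse the skeleton of the proof of Theorem~\ref{th:NaiveIter} almost verbatim, changing only the scoring subroutine: in place of the exact circuit $O_{\rm sc,it}$ that deterministically writes $\ket{w_{k,i}}$, I would build a QMCI-based scoring oracle that writes an \emph{approximation} of $w_{k,i}$ in superposition. Since $w_{k,i}=\sum_{j=0}^{m-1}M_k(j,s_{i+j})$ equals $m$ times the mean of the $m$ numbers $M_k(j,s_{i+j})\in[0,1]$, I would apply Theorem~\ref{th:QMCI} with $N=m$ and the per-term oracle $\ket{k}\ket{i}\ket{j}\ket{0}\mapsto\ket{k}\ket{i}\ket{j}\ket{M_k(j,s_{i+j})}$, which costs one query each to $O_{\rm seq}$ and $O_{\rm PWM}$ after computing $l=i+j$. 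Taking the QMCI accuracy on the mean to be $\frac{w_{\rm hard}-w_{\rm soft}}{2m}$ makes the error on the score at most $\frac{w_{\rm hard}-w_{\rm soft}}{2}$, so one QMCI call costs $O\big(\frac{m}{w_{\rm hard}-w_{\rm soft}}\log\frac{1}{\delta_{\rm QMCI}}\big)$ queries to $O_{\rm seq},O_{\rm PWM}$. I would then assemble an oracle $\tilde O^{\mathcal P}_{\rm sc,QMCI}$ by copying the construction of Eq.~(\ref{eq:OPscit}), but comparing the QMCI estimate against $w_{\rm mid}$ of Eq.~(\ref{eq:wmid}) instead of $w_{\rm th}$, so that the final flag marks exactly the branches satisfying (estimate $\ge w_{\rm mid}$) and $(k,i)\notin\mathcal P$.

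The crux is translating the QMCI error into a controllable marked amplitude. For each pair let $p_{k,i}$ be the probability weight that its estimate reaches $w_{\rm mid}$. By the accuracy bound of Theorem~\ref{th:QMCI} together with the placement of $w_{\rm mid}$ halfway between $w_{\rm soft}$ and $w_{\rm hard}$, a good estimate of a hard pair ($w_{k,i}\ge w_{\rm hard}$) lies above $w_{\rm mid}$ while a good estimate of a sub-soft pair ($w_{k,i}<w_{\rm soft}$) lies below it; hence $p_{k,i}\ge 1-\delta_{\rm QMCI}$ for hard pairs and $p_{k,i}\le\delta_{\rm QMCI}$ for sub-soft ones, with no constraint in the intermediate band. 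The state $\tilde O^{\mathcal P}_{\rm sc,QMCI}\ket{0}$ then has the QAA form of Eq.~(\ref{eq:V}) with marked amplitude $a=\frac{1}{Kn}\sum_{(k,i)\notin\mathcal P}p_{k,i}$. Setting the QAA threshold to $\gamma=\Theta(1/(Kn))$, a single still-unfound hard pair already forces $a\ge\gamma$, so $\proc{QAA}$ succeeds with probability $\ge 1-\delta_{\rm QAA}$ and returns a flagged, previously unseen pair.

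I would then run the loop of Algorithm~\ref{alg:Naive} on $\tilde O^{\mathcal P_{\rm temp}}_{\rm sc,QMCI}$, appending each measured pair to $\mathcal P_{\rm temp}$ and stopping on ``failure''. Correctness splits in two. As long as a hard pair is unfound, $a\ge\gamma$ keeps QAA succeeding, so every hard pair is eventually measured; a union bound over the at most $n_{\rm soft}+1$ runs makes the total QAA-failure probability at most $(n_{\rm soft}+1)\delta_{\rm QAA}$. Separately, in any run containing an unfound hard pair the probability of measuring a sub-soft pair is at most (total sub-soft weight)/(total marked weight) $\le \frac{Kn\,\delta_{\rm QMCI}}{1-\delta_{\rm QMCI}}$, so over all runs it is $O(K^2n^2\delta_{\rm QMCI})$. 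Choosing $\delta_{\rm QAA}=\Theta(\delta/(Kn))$ and $\delta_{\rm QMCI}=\Theta(\delta/(K^2n^2))$ pushes both contributions below $\delta$ and produces the $\log(Kn/\delta)$ and $\log(K^2n^2/\delta)$ factors of the statement. Conditioned on this good event the loop runs at most $n_{\rm soft}+1$ times, each run making $O(\sqrt{Kn}\log\frac{1}{\delta_{\rm QAA}})$ calls to $\tilde O^{\mathcal P_{\rm temp}}_{\rm sc,QMCI}$, and each such call contains one $O_{\mathcal P}$ query and one QMCI block; multiplying the number of runs, the per-run QAA queries, and the per-query QMCI cost reproduces Eqs.~(\ref{eq:compQMCIOseq1}) and~(\ref{eq:compQMCIOP1}). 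For cases (ii) and (iii), where $\mathcal P_{\rm hard}=\emptyset$, the amplitude $a$ is never guaranteed to reach $\gamma$, so QAA stays in the $a<\gamma$ or $a=0$ branches of Theorem~\ref{th:QAA} and can only report ``no match'' or amplify intermediate (soft) pairs, giving the stated outputs.

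I expect the main obstacle to be exactly this false-positive bookkeeping, which is what separates the argument from the clean naive-method proof. Because QMCI marks the solution set only approximately, the marked amplitude is no longer the exact fraction $|\mathcal P_{\rm sol}\cap\overline{\mathcal P_{\rm temp}}|/(Kn)$, and one must simultaneously keep a clean lower bound on $a$ from the hard pairs (to guarantee their discovery) and keep the aggregate sub-soft weight negligible in \emph{every} iteration, including late runs dominated by low-weight intermediate pairs where the marked-weight denominator is small. Making the union bound close with only a $\log(K^2n^2/\delta)$ overhead, and in particular justifying the \emph{certainly} claims of cases (ii) and (iii)---which ultimately rest on the guarantee that a sub-soft estimate cannot reach $w_{\rm mid}$, i.e.\ that $a=0$ when $n_{\rm soft}=0$---is the delicate step, and it is what pins down $\delta_{\rm QMCI}=\Theta(\delta/(K^2n^2))$ and the $O(n_{\rm soft})$ rather than $O(\sqrt{n_{\rm soft}})$ growth of the iteration count.
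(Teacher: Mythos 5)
Your construction tracks the paper's proof closely: the per-position oracle (the paper's $O_{\rm sc,one}$), the QMCI accuracy $\epsilon'=\frac{w_{\rm hard}-w_{\rm soft}}{2m}$, the comparison against $w_{\rm mid}$, the QAA threshold $\gamma=\Theta(1/(Kn))$, the amplitude lower bound $\ge\frac{1-\delta'}{Kn}\ge\frac{1}{2Kn}$ from a single unfound hard pair, the per-run sub-soft false-positive bound of order $Kn\delta_{\rm QMCI}$, and the parameter choices $\delta_{\rm QMCI}=\Theta(\delta/(K^2n^2))$, $\delta_{\rm QAA}=\Theta(\delta/(Kn))$ are all exactly the paper's choices ($\delta'=\frac{\delta}{4K^2n^2}$, $\delta''=\frac{\delta}{2Kn}$). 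However, there is a genuine gap in how you handle cases (ii) and (iii). You run the loop of Algorithm~\ref{alg:Naive} unchanged, appending every measured pair, and you justify the ``certainly'' claims by asserting that a sub-soft estimate cannot reach $w_{\rm mid}$, i.e.\ that $a=0$ when $n_{\rm soft}=0$. That assertion is false: Theorem~\ref{th:QMCI} only guarantees that the \emph{good}-approximation branches carry weight $\ge 1-\delta'$; the residual weight $\le\delta'$ is unconstrained and may sit on values $y\ge w_{\rm mid}/m$ even for pairs with $w_{k,i}<w_{\rm soft}$ --- this is precisely what Eq.~(\ref{eq:alphaSqSum}) bounds, and it is generically nonzero. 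Hence with $n_{\rm soft}=0$ the marked amplitude satisfies $0<a<\gamma$, a branch in which Theorem~\ref{th:QAA} explicitly permits $\proc{QAA}$ to output ``success''; your algorithm would then measure, append, and ultimately output a sub-soft pair, violating the deterministic claim of case (ii) (and similarly the claim in case (iii) that only elements of $\mathcal{P}_{\rm soft}$ are ever output).

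The paper closes this hole with a step your proposal omits: after each measurement of $(k,i)$, Algorithm~\ref{alg:QMCIbased} \emph{classically} recomputes $w_{k,i}$ and accepts the pair only if $w^{\rm cl}_{k,i}\ge w_{\rm soft}$, terminating the loop otherwise. This verification makes the ``certainly'' statements hold regardless of the randomness in QMCI and QAA, since no sub-soft pair can survive the check. It also repairs a secondary weakness in your complexity accounting: you bound the iteration count by $n_{\rm soft}+1$ only conditioned on the good event, whereas the theorem states the query counts unconditionally; with the classical check, every ``success'' either contributes a fresh element of $\mathcal{P}_{\rm soft}$ (at most $n_{\rm soft}$ times, since the marking excludes $\mathcal{P}_{\rm temp}$) or stops the algorithm, so the $O(n_{\rm soft})$ loop bound and the bounds (\ref{eq:compQMCIOseq1})--(\ref{eq:compQMCIOP1}) hold deterministically. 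The fix costs $O(m)$ classical time per iteration and no additional oracle queries, so the stated complexities are unaffected; the rest of your argument then goes through as in the paper.
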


\begin{proof}

First, note that, for any $k\in[K]_0$, $i\in[n-m+1]_0$ and $j\in[m]_0$, we can perform the following operation:
\begin{eqnarray}
&&\ket{k}\ket{i}\ket{j}\ket{0}\ket{0}\ket{0} \nonumber \\
&\rightarrow& \ket{k}\ket{i}\ket{j}\ket{i+j}\ket{0}\ket{0} \nonumber \\
&\rightarrow& \ket{k}\ket{i}\ket{j}\ket{i+j}\ket{s_{i+j}}\ket{0} \nonumber \\
&\rightarrow& \ket{k}\ket{i}\ket{j}\ket{i+j}\ket{s_{i+j}}\ket{M_k(j,s_{i+j})}, \label{eq:Oscone}
\end{eqnarray}
where $O_{\rm add}$, $O_{\rm seq}$ and $O_{\rm PWM}$ are used at the first, second and third arrows, respectively.
We denote by $O_{\rm sc,one}$ the oracle for the above operation.
According to Theorem \ref{th:QMCI}, for any $\epsilon,\delta\in(0,1)$, we use $O_{\rm sc,one}$ $O(\epsilon^{-1}\log(\delta^{-1}))$ times to construct the oracle $O^{\rm sc,QMCI}_{\epsilon,\delta}$ that acts as
\begin{equation}
    O^{\rm sc,QMCI}_{\epsilon,\delta}\ket{k}\ket{i}\ket{0}=\ket{k}\ket{i}\sum_{y\in \mathcal{Y}_{k,i}} \alpha^{k,i}_y\ket{y}.
\end{equation}
Here, $\mathcal{Y}_{k,i}$ is a finite set of real numbers that includes a subset $\tilde{\mathcal{Y}}_{k,i}$ consisting of $\epsilon$-approximations of $\frac{w_{k,i}}{m}$, and $\{\alpha^{k,i}_y\}_{y\in\mathcal{Y}_{k,i}}$ are complex numbers satisfying
\begin{equation}
\sum_{\tilde{y}\in\tilde{\mathcal{Y}}_{k,i}}|\alpha^{k,i}_{\tilde{y}}|^2\ge 1-\delta.
\end{equation}
Furthermore, as Eq. (\ref{eq:OPscit}), we can construct $\tilde{O}^{\rm sc,QMCI}_{\epsilon,\delta,\mathcal{P}}$ that acts on the seven-register system as
\begin{widetext}
\begin{eqnarray}
    &&\tilde{O}^{\rm sc,QMCI}_{\epsilon,\delta,\mathcal{P}}\ket{0}\ket{0}\ket{0}\ket{0}\ket{0}\ket{0}\ket{0}\nonumber \\
    &=& \frac{1}{\sqrt{Kn}}\sum_{k=0}^{K-1}\sum_{i=0}^{n-1}\sum_{y\in \mathcal{Y}_{k,i}} \alpha^{k,i}_y\ket{k}\ket{i}\ket{y}\Ket{\frac{w_{\rm mid}}{m}}\left(\mathbbm{1}_{y\ge \frac{w_{\rm mid}}{m} \ \wedge \ (k,i)\notin\mathcal{P}}\ket{1}\ket{1}\ket{1}+ \mathbbm{1}_{y\ge \frac{w_{\rm mid}}{m} \ \wedge \ (k,i)\in\mathcal{P}}\ket{1}\ket{0}\ket{0}+ \right. \nonumber \\
    && \ \quad\qquad\qquad\qquad\qquad\qquad\qquad\qquad \left.\mathbbm{1}_{y< \frac{w_{\rm mid}}{m} \ \wedge \ (k,i)\notin\mathcal{P}}\ket{0}\ket{1}\ket{0}+\mathbbm{1}_{y< \frac{w_{\rm mid}}{m} \ \wedge \ (k,i)\in\mathcal{P}}\ket{0}\ket{0}\ket{0}\right) \nonumber \\
    &=:& \beta_{\mathcal{P},1}\ket{\xi_{\mathcal{P},1}}\ket{1}+\beta_{\mathcal{P},0}\ket{\xi_{\mathcal{P},0}}\ket{0}=:\ket{\Xi_\mathcal{P}}
\end{eqnarray}
\end{widetext}
for any subset $\mathcal{P}\subset\mathcal{P}_{\rm all}$, using $O^{\rm sc,QMCI}_{\epsilon,\delta}$, $O_\mathcal{P}$ and some arithmetic oracles.
Here,
\begin{eqnarray}
    &&\ket{\xi_{\mathcal{P},1}}:= \frac{1}{\sqrt{\sum_{(k,i)\in\overline{\mathcal{P}}}\sum_{\substack{y\in \mathcal{Y}_{k,i} \\ y\ge w_{\rm mid}/m}}|\alpha^{k,i}_y|^2}} \times \nonumber \\
    &&\qquad\qquad\sum_{(k,i)\in\overline{\mathcal{P}}}\sum_{\substack{y\in \mathcal{Y}_{k,i} \\ y\ge w_{\rm mid}/m}} \alpha^{k,i}_y\ket{k}\ket{i}\ket{y}\Ket{\frac{w_{\rm mid}}{m}}\ket{1}\ket{1} 
\end{eqnarray}
is the quantum states on the first six register, and $\ket{\xi_{\mathcal{P},0}}$ is another state on the same system.
\begin{equation}
    \beta_{\mathcal{P},1}=\sqrt{\frac{\sum_{(k,i)\in\overline{\mathcal{P}}}\sum_{\substack{y\in \mathcal{Y}_{k,i} \\ y\ge w_{\rm mid}/m}}|\alpha^{k,i}_y|^2}{Kn}},
\end{equation}
and $\beta_{\mathcal{P},0}$ is another complex number satisfying $|\beta_{\mathcal{P},0}|^2+|\beta_{\mathcal{P},1}|^2=1$.
Note that $\tilde{O}^{\rm sc,QMCI}_{\epsilon,\delta,\mathcal{P}}$ uses $O^{\rm sc,QMCI}_{\epsilon,\delta}$ once, and thus $O_{\rm sc,one}$ $O(\epsilon^{-1}\log(\delta^{-1}))$ times.
Since $O_{\rm sc,one}$ calls $O_{\rm seq}$ and $O_{\rm PWM}$ once each, $\tilde{O}^{\rm sc,QMCI}_{\epsilon,\delta,\mathcal{P}}$ calls them $O(\epsilon^{-1}\log(\delta^{-1}))$ times, consequently.
Also note that $\tilde{O}^{\rm sc,QMCI}_{\epsilon,\delta,\mathcal{P}}$ uses $O_{\mathcal{P}}$ once.

Then, we present the QMCI-based method as Algorithm \ref{alg:QMCIbased}.

\SetAlgorithmName{\algorithmsname}{}{}
\begin{algorithm}[H]
    \KwIn{
        \begin{itemize}
        \item $\delta\in(0,1)$
        \item $w_{\rm soft}$ and $w_{\rm hard}$ such that $0<w_{\rm soft}<w_{\rm hard}<m$
        \end{itemize}
    }

    Set $\mathcal{P}_{\rm temp}=\emptyset$, ${\rm StopFlg}=0$, $\delta^\prime=\frac{\delta}{4K^2n^2}$, $\delta^{\prime\prime}=\frac{\delta}{2Kn}$ and 
    \begin{equation}
        \epsilon^\prime=\frac{w_{\rm hard}-w_{\rm soft}}{2m}. \label{eq:epspr}
    \end{equation}
    \\
    
    \While{${\rm StopFlg}=0$}{
        Perform $\proc{QAA}\left(\tilde{O}^{\rm sc,QMCI}_{\epsilon^\prime,\delta^\prime,\mathcal{P}_{\rm temp}},\frac{1}{2Kn},\delta^{\prime\prime}\right)$.\\

        \uIf{The output message is ``success"}{
            Measure the first and second registers in the output state $\ket{\xi_{\mathcal{P}_{\rm temp},1}}$ and let the outcome be $(k,i)$.
            \\

            Calculate $w_{k,i}$ classically and let the result be $w^{\rm cl}_{k,i}$.

            \uIf{$w^{\rm cl}_{k,i}\ge w_{\rm soft}$}{
            Add $(k,i)$ to $\mathcal{P}_{\rm temp}$.
            }\Else{
                Set ${\rm StopFlg}=1$
            }   
        }
        \Else{
            Set ${\rm StopFlg}=1$
        }
    }

    \uIf{$\mathcal{P}_{\rm temp}\ne \emptyset$}{
        Output $\mathcal{P}_{\rm temp}$.
    }
    \Else{
        Output "no match".
    }
    \caption{The QMCI-based method}\label{alg:QMCIbased}
\end{algorithm}

\ \\

Then, let us consider the behavior of this algorithm in the following cases.

\ \\

\noindent (i) $n_{\rm hard}>0$ \\

For any $(k,i)\in\mathcal{P}_{\rm hard}$,
\begin{equation}
    \left|y-\frac{w_{k,i}}{m}\right|\le \epsilon^\prime \Rightarrow y\ge \frac{w_{\rm mid}}{m}
\end{equation}
holds for any $y\in\mathbb{R}$ under the definitions (\ref{eq:wmid}) and (\ref{eq:epspr}), and thus
\begin{equation}
    \sum_{\substack{y\in \mathcal{Y}_{k,i} \\ y\ge w_{\rm mid}/m}}|\alpha^{k,i}_{y}|^2\ge \sum_{\substack{y\in \mathcal{Y}_{k,i} \\ |y-w_{k,i}|\le\epsilon^\prime}}|\alpha^{k,i}_{y}|^2\ge 1-\delta^\prime\ge \frac{1}{2} \label{eq:alphasq}
\end{equation}
holds.
This means that, if $\mathcal{P}_{\rm hard}\cap\overline{\mathcal{P}_{\rm temp}}\ne\emptyset$,
\begin{eqnarray}
    |\beta_{\mathcal{P}_{\rm temp},1}|^2&=&\frac{\sum_{(k,i)\in\overline{\tilde{\mathcal{P}}}_{\rm temp}}\sum_{\substack{y\in \mathcal{Y}_{k,i} \\ y\ge w_{\rm mid}/m}}|\alpha^{k,i}_y|^2}{Kn} \nonumber \\
    &\ge& \frac{\sum_{(k,i)\in\mathcal{P}_{\rm hard}\cap\overline{\mathcal{P}_{\rm temp}}}\sum_{\substack{y\in \mathcal{Y}_{k,i} \\ y\ge w_{\rm mid}/m}}|\alpha^{k,i}_y|^2}{Kn}
    \nonumber \\
    &\ge& \sum_{(k,i)\in\mathcal{P}_{\rm hard}\cap\overline{\mathcal{P}_{\rm temp}}}\frac{1}{2Kn} \nonumber \\
    &\ge& \frac{1}{2Kn}, \label{eq:betasq}
\end{eqnarray}
and thus $\proc{QAA}\left(\tilde{O}^{\rm sc,QMCI}_{\epsilon^\prime,\delta^\prime,\mathcal{P}_{\rm temp}},\frac{1}{2Kn},\delta^{\prime\prime}\right)$ outputs $\ket{\xi_{\mathcal{P}_{\rm temp},1}}$ with probability at least $1-\delta^{\prime\prime}=1-\frac{\delta}{2Kn}$.

On the other hand, for $(k,i)\notin\mathcal{P}_{\rm soft}$,
\begin{equation}
        y\ge \frac{w_{\rm mid}}{m}\Rightarrow\left|y-\frac{w_{k,i}}{m}\right|>\epsilon^\prime\Rightarrow y\notin\tilde{\mathcal{Y}}_{k,i}
\end{equation}
holds for any $y\in\mathbb{R}$, and thus
\begin{equation}
    \sum_{\substack{y\in \mathcal{Y}_{k,i} \\ y\ge w_{\rm mid}/m}}|\alpha^{k,i}_{y}|^2 = \sum_{\substack{y\in \mathcal{Y}_{k,i}\setminus\tilde{\mathcal{Y}}_{k,i} \\ y\ge w_{\rm mid}/m}}|\alpha^{k,i}_{y}|^2\le \sum_{y\in \mathcal{Y}_{k,i}\setminus\tilde{\mathcal{Y}}_{k,i}}|\alpha^{k,i}_{y}|^2 < \delta^\prime \label{eq:alphaSqSum}
\end{equation}
holds.
From this, the probability that we obtain $(k,i)\in\mathcal{P}_{\rm soft}$ in measuring the first two registers in $\ket{\xi_{\mathcal{P}_{\rm temp},1}}$ is evaluated as
\begin{eqnarray}
&&\frac{\sum_{(k,i)\in\mathcal{P}_{\rm soft}\cap\overline{\mathcal{P}_{\rm temp}}}\sum_{\substack{y\in \mathcal{Y}_{k,i} \\ y\ge w_{\rm mid}/m}}|\alpha^{k,i}_y|^2}{\sum_{(k,i)\in\overline{\mathcal{P}_{\rm temp}}}\sum_{\substack{y\in \mathcal{Y}_{k,i} \\ y\ge w_{\rm mid}/m}}|\alpha^{k,i}_y|^2} \nonumber \\
&=& 1-\frac{\sum_{(k,i)\in\overline{\mathcal{P}_{\rm soft}}\cap\overline{\mathcal{P}_{\rm temp}}}\sum_{\substack{y\in \mathcal{Y}_{k,i} \\ y\ge w_{\rm mid}/m}}|\alpha^{k,i}_y|^2}{\sum_{(k,i)\in\overline{\mathcal{P}_{\rm temp}}}\sum_{\substack{y\in \mathcal{Y}_{k,i} \\ y\ge w_{\rm mid}/m}}|\alpha^{k,i}_y|^2} \nonumber \\
&\ge& 1-2\sum_{(k,i)\in\overline{\mathcal{P}_{\rm soft}}\cap\overline{\mathcal{P}_{\rm temp}}}\delta^\prime \nonumber \\
&\ge& 1-2Kn\delta^\prime \nonumber \\
&=& 1-\frac{\delta}{2Kn}.
\end{eqnarray}
At the first inequality, we used Eq. (\ref{eq:alphaSqSum}) and
\begin{equation}
    \sum_{(k,i)\in\overline{\mathcal{P}_{\rm temp}}}\sum_{\substack{y\in \mathcal{Y}_{k,i} \\ y\ge w_{\rm mid}/m}}|\alpha^{k,i}_y|^2=Kn|\beta_{\mathcal{P}_{\rm temp},1}|^2\ge\frac{1}{2},
\end{equation}
which follows from Eq. (\ref{eq:betasq}). 

Combining the above discussions, we see that, if $\mathcal{P}_{\rm hard}\cap\overline{\mathcal{P}_{\rm temp}}\ne\emptyset$, we obtain an element in $\mathcal{P}_{\rm soft}\cap\overline{\mathcal{P}_{\rm temp}}$ by $\proc{QAA}\left(\tilde{O}^{\rm sc,QMCI}_{\epsilon^\prime,\delta^\prime,\mathcal{P}_{\rm temp}},\frac{1}{2Kn},\delta^{\prime\prime}\right)$ and the subsequent measurement on $\ket{\xi_{\mathcal{P}_{\rm temp},1}}$ with probability at least $\left(1-\frac{\delta}{2Kn}\right)^2\ge 1-\frac{\delta}{Kn}$.
Therefore, with some probability, the following happens: we successively obtain elements in $\mathcal{P}_{\rm soft}$ in loop 2-12 in Algorithm \ref{alg:QMCIbased}, until we get all the elements in $\mathcal{P}_{\rm hard}$.
Since the number of loops is at most $n_{\rm soft}$, the probability that this happens is at least
\begin{equation}
    \left(1-\frac{\delta}{Kn}\right)^{n_{\rm soft}} \ge \left(1-\frac{\delta}{Kn}\right)^{Kn} \ge 1-\delta.
\end{equation}
We can evaluate the query complexity in this loop as Eqs. (\ref{eq:compQMCIOseq1}) and (\ref{eq:compQMCIOP1}) under the current setting of $\epsilon^\prime,\delta^\prime$ and $\delta^{\prime\prime}$, recalling that $\proc{QAA}\left(\tilde{O}^{\rm sc,QMCI}_{\epsilon^\prime,\delta^\prime,\mathcal{P}_{\rm temp}},\frac{1}{2Kn},\delta^{\prime\prime}\right)$ calls $\tilde{O}^{\rm sc,QMCI}_{\epsilon^\prime,\delta^\prime,\mathcal{P}_{\rm temp}}$ $O\left(\sqrt{Kn}\log\left(\frac{1}{\delta^{\prime\prime}}\right)\right)$ times and that $\tilde{O}^{\rm sc,QMCI}_{\epsilon^\prime,\delta^\prime,\mathcal{P}_{\rm temp}}$ calls $O_{\rm seq}$ and $O_{\rm PWM}$ $O\left(\frac{1}{\epsilon^\prime}\log\left(\frac{1}{\delta^\prime}\right)\right)$ times and $O_\mathcal{P}$ $O(1)$ times.

\ \\

\noindent (ii) $n_{\rm soft}>0$ \\

With certainty, the first run of QAA outputs the message ``failure" or, even if not, $w_{k,i}$ classically calculated at step 6 in Algorithm \ref{alg:QMCIbased} is smaller than $w_{\rm soft}$, since every $(k,i)\in\mathcal{P}_{\rm all}$ yields $w_{k,i}<w_{\rm soft}$ in this case.
Therefore, the algorithm certainly ends outputting ``no match", with QAA run only once.
The query complexity of this is evaluated as Eqs. (\ref{eq:compQMCIOseq2}) and (\ref{eq:compQMCIOP2}).

\ \\

\noindent (iii) $n_{\rm hard}=0$ and $n_{\rm soft}>0$ \\

The algorithm ends with only one QAA that outputs ``failure" or $(k,i)\in\mathcal{P}_{\rm all}$ such that $w_{k,i}<w_{\rm soft}$, or QAA runs some times outputting $(k,i)\in\mathcal{P}_{\rm soft}$.
The number of QAA loops is at most $n_{\rm soft}$, and thus the query complexity is evaluated as Eqs. (\ref{eq:compQMCIOseq1}) and (\ref{eq:compQMCIOP1}) similarly to the case (i).

\end{proof}

Let us make some comments.
First, note that, although Eqs. (\ref{eq:compNaiveOseq}) and (\ref{eq:compNaiveOseqNo}) scales with $n_{\rm sol}$ as $O(\sqrt{n_{\rm sol}})$, Eqs. (\ref{eq:compQMCIOseq1}) and (\ref{eq:compQMCIOP1}) scales linearly with $n_{\rm soft}$, which means that the QMCI-based method has the worse scaling with respect to the number of matches.
This difference can be understood as follows.
In the naive iteration method, among the computational basis states contained in the state $\ket{\Psi_{\mathcal{P}_{\rm temp}}}$, those with the last qubit taking $\ket{1}$ are $\ket{k}\ket{i}\ket{w_{k,i}}\ket{w_{\rm th}}\ket{1}\ket{1}\ket{1}$ with $(k,i)\in\mathcal{P}_{\rm sol}\cap\overline{\mathcal{P}_{\rm temp}}$, each of which having the amplitude $\sqrt{\frac{1}{Kn}}$.
They constitute $\ket{\psi_{\mathcal{P}_{\rm temp},1}}\ket{1}$, the target state of QAA, whose amplitude decreases as $\sqrt{\frac{n_{\rm sol}}{Kn}},\sqrt{\frac{n_{\rm sol}-1}{Kn}},...,\sqrt{\frac{1}{Kn}}$ in the QAA loop, and this leads to the evaluation of the total complexity in Eqs. (\ref{eq:compQMCIOseq1}) and (\ref{eq:compQMCIOP1}).
On the other hand, in the QMCI-based method, among the computational basis states contained in the state $\ket{\Xi_{\mathcal{P}_{\rm temp}}}$, those with the last qubit taking $\ket{1}$ are $\ket{k}\ket{i}\ket{y}\Ket{\frac{w_{\rm th}}{m}}\ket{1}\ket{1}\ket{1}$ with $(k,i)$ being any elements in $\overline{\mathcal{P}_{\rm temp}}$, although those for $(k,i)\in\mathcal{P}_{\rm soft}$ constitute the most part of $\ket{\Xi_{\mathcal{P}_{\rm temp}}}$ in terms of the squared amplitude.
When we write $\ket{\Xi_{\mathcal{P}_{\rm temp}}}$ as
\begin{equation}
    \ket{\Xi_{\mathcal{P}_{\rm temp}}}=\frac{1}{\sqrt{Kn}}\sum_{(k,i)\in\overline{\mathcal{P}_{\rm temp}}}\gamma_{k,i}\ket{\tilde{\xi}_{\mathcal{P}_{\rm temp},1;k,i}}+\beta_{\mathcal{P}_{\rm temp},0}\ket{\xi_{\mathcal{P}_{\rm temp},0}}\ket{0}
    \end{equation}
with
\begin{eqnarray}
    \ket{\tilde{\xi}_{\mathcal{P}_{\rm temp},1;k,i}}&:=&\frac{1}{\gamma_{k,i}}\ket{k}\ket{i}\sum_{\substack{y\in \mathcal{Y}_{k,i} \\ y\ge w_{\rm mid}/m}}\alpha^{k,i}_y\ket{y}\Ket{\frac{w_{\rm th}}{m}}\ket{1}\ket{1}\ket{1}, \nonumber \\
    \gamma_{k,i}&:=&\sqrt{\sum_{\substack{y\in \mathcal{Y}_{k,i} \\ y\ge w_{\rm mid}/m}}\left|\alpha^{k,i}_y\right|^2},
\end{eqnarray}
the squared amplitude of $\ket{\tilde{\xi}_{\mathcal{P}_{\rm temp},1;k,i}}$, $\frac{|\gamma_{k,i}|^2}{Kn}$, is at least $\frac{1}{2Kn}$ for $(k,i)\in\mathcal{P}_{\rm hard}$ as we see from Eq. (\ref{eq:alphasq}), but that for $(k,i)\in\mathcal{P}_{\rm soft}\setminus\mathcal{P}_{\rm hard}$ can be much smaller than it.
Nevertheless, the squared amplitudes of the states $\ket{\tilde{\xi}_{\mathcal{P}_{\rm temp},1;k,i}}$ for $(k,i)\in\mathcal{P}_{\rm soft}\setminus\mathcal{P}_{\rm hard}$ can pile up to the value comparable with $\frac{1}{2Kn}$.
In such a situation, it is possible that, in the QAA loop, $\proc{QAA}\left(\tilde{O}^{\rm sc,QMCI}_{\epsilon^\prime,\delta^\prime,\mathcal{P}_{\rm temp}},\frac{1}{2Kn},\delta^{\prime\prime}\right)$ continues to output $\ket{\xi_{\mathcal{P}_{\rm temp},1}}$ and we continue to get $(k,i)\in\mathcal{P}_{\rm soft}$, until we get $O(n_{\rm soft})$ elements in $\mathcal{P}_{\rm soft}$ and the squared amplitude $|\beta_{\mathcal{P}_{\rm temp},1}|^2$ of $\ket{\xi_{\mathcal{P}_{\rm temp},1}}\ket{1}$ in $\ket{\Xi_{\mathcal{P}_{\rm temp}}}$ decreases below $\frac{1}{2Kn}$.
When this happens, the query complexity becomes comparable with the bounds (\ref{eq:compQMCIOseq1}) and (\ref{eq:compQMCIOP1}).

Second, seemingly, the bounds on the number of queries to $O_{\rm seq}$ and $O_{\rm PWM}$ in Eqs. (\ref{eq:compQMCIOseq1}) and (\ref{eq:compQMCIOseq2}) linearly scale with $m$, which is similar to Eq. (\ref{eq:compNaiveOseq}) and (\ref{eq:compNaiveOseqNo}) and makes us consider that there is no speedup with respect to $m$ compared to the naive iteration method.
However, if we can set $w_{\rm soft}$ and $w_{\rm hard}$ with larger difference for larger $m$, the dependence of the bounds in Eqs. (\ref{eq:compQMCIOseq1}) and (\ref{eq:compQMCIOseq2}) on $m$ becomes milder than linear.
This seems reasonable because, naively thinking, the typical value of the segment score, which is the sum of $m$ terms, becomes larger for larger $m$, and so do $w_{\rm soft}$, $w_{\rm hard}$ and their difference.
In fact, in Sec. \ref{sec:ScoreTol}, we argue that it is reasonable to take $w_{\rm hard}$ and $w_{\rm soft}$ so that
\begin{equation}
w_{\rm hard}-w_{\rm soft}=\Omega(\sqrt{m}), \label{eq:whardwsoftdiff}
\end{equation}
from which Eqs. (\ref{eq:compQMCIOseq1}) and (\ref{eq:compQMCIOseq2}) turns into
\begin{equation}
    O\left(n_{\rm soft}\sqrt{Knm}\log\left(\frac{K^2n^2}{\delta}\right)\log\left(\frac{Kn}{\delta}\right)\right) \label{eq:compQMCIOseq3}
\end{equation}
and
\begin{equation}
    O\left(\sqrt{Knm}\log\left(\frac{K^2n^2}{\delta}\right)\log\left(\frac{Kn}{\delta}\right)\right),  \label{eq:compQMCIOseq4}
\end{equation}
respectively.
If so, the QMCI-based method can be beneficial compared to the naive iteration method for small $n_{\rm soft}$ and large $m$, that is, in the case that there is a small number of matches and the sequence motif length is large.

\section{Discussion \label{sec:dis}}

\subsection{Implementations of the oracles with QRAMs and the cost to prepare them \label{sec:QRAM}}

Now, we consider how to implement the oracles $O_{\rm seq}$, $O_{\rm PWM}$ and $O_{\mathcal{P}}$, which we have simply assumed are implementable so far.

It seems that, in order to realize the quantum access to the elements in the sequence $S$ like Eq. (\ref{eq:Oseq}), we need to use a QRAM \cite{Giovannetti2008QRAM}.
Although some difficulties in constructing it in reality have been pointed out \cite{Arunachalam_2015}, it is the very device that provides the access to the indexed data in superposition in $O(\log N)$ time with respect to $N$ the number of the data points.
Of course, preparing a QRAM, that is, registering the $N$ data points into the QRAM requires $O(N)$ time.
To prepare $O_{\rm seq}$, we need $O(n)$ time.

We can use a QRAM also for $O_{\rm PWM}$ in Eq. (\ref{eq:OPWM}).
Although the indexes are now three-fold, $(k,i,a)$, it is straightforward to combine them and regard it as an integer.
Preparing this takes $O(m|\mathcal{A}|K)$ time, which is expected to be much shorter than $O(N)$ in usual situations.

We can also construct $O_{\mathcal{P}}$, especially $O_{\mathcal{P}_{\rm temp}}$, using a QRAM.
Naively thinking, we can do this by registering 0 or 1, which represents $(k,i)\in\mathcal{P}$ or not, for every $(k,i)\in\mathcal{P}_{\rm all}$.
However, this takes $O(nK)$ time, which exceeds $O(nm)$ time for the classical exhaustive search if $K>m$.
Therefore, we adopt the following approach that takes the shorter time for QRAM preparation.
First, we plausibly assume that the number of the matched PWMs at every position $i$ in the sequence $S$ is at most $\kappa$, which is $O(1)$.
Then, we prepare the QRAM $\tilde{O}_{\mathcal{P}_{\rm temp}}$ that outputs $\kappa$ indices $k_{i,1},...,k_{i,\kappa}\in[K]_0$ such that $(k_{i,1},i),...,(k_{i,\kappa},i)\in\mathcal{P}_{\rm temp}$ for each $i\in[n]_0$:
\begin{equation}
    \tilde{O}_{\mathcal{P}_{\rm temp}}\ket{i}\underbrace{\ket{0}\cdots\ket{0}}_\kappa=\ket{i}\ket{k_{i,1}}\cdots\ket{k_{i,\kappa}}.
\end{equation}
If $\kappa^\prime$ the number of such indices is smaller than $\kappa$, we set $k_{i,\kappa^\prime+1},...,k_{i,\kappa}$ to some dummy number (say, $-1$) not contained in $[K]_0$.
Using this, we can perform the following operation for any $(k,i)\in\mathcal{P}_{\rm all}$:
\begin{eqnarray}
    &&\ket{k}\ket{i}\underbrace{\ket{0}\cdots\ket{0}}_\kappa \underbrace{\ket{0}\cdots\ket{0}}_\kappa\ket{0} \nonumber \\
    &\rightarrow&\ket{k}\ket{i}\ket{k_{i,1}}\cdots\ket{k_{i,\kappa}} \ket{0}\cdots\ket{0}\ket{0} \nonumber \\
    &\rightarrow&\ket{k}\ket{i}\ket{k_{i,1}}\cdots\ket{k_{i,\kappa}} \Ket{\mathbbm{1}_{k\ne k_{i,1}}}\cdots\Ket{\mathbbm{1}_{k \ne 
 k_{i,\kappa}}}\ket{0} \nonumber \\
    &\rightarrow&\ket{k}\ket{i}\ket{k_{i,1}}\cdots\ket{k_{i,\kappa}} \Ket{\mathbbm{1}_{k\ne k_{i,1}}}\cdots\Ket{\mathbbm{1}_{k \ne 
 k_{i,\kappa}}}\Ket{\mathbbm{1}_{k \ne 
 k_{i,1} \wedge\cdots\wedge k \ne 
 k_{i,\kappa}}}. \nonumber \\
 && \label{eq:OPImpl}
\end{eqnarray}
Here, the first to $(\kappa+2)$th kets correspond to registers with the sufficient number of qubits and the other kets correspond to the single qubits.
In Eq. (\ref{eq:OPImpl}), we use $\tilde{O}_{\mathcal{P}_{\rm temp}}$ at the first arrow and $O_=$'s at the second arrow, and the last operation is done by the multiply controlled NOT gate on the last $\kappa+1$ qubits.
Note that `1' on the last qubit means $(k,i)\notin\mathcal{P}_{\rm temp}$.
Therefore, the above operation is in fact $O_{\mathcal{P}_{\rm temp}}$, with some registers in Eq. (\ref{eq:OPImpl}) regarded as ancillas.
In this implementation, the QRAM $\tilde{O}_{\mathcal{P}_{\rm temp}}$ is queried once in a call to the oracle $O_{\mathcal{P}_{\rm temp}}$, along with $O(\kappa)$ uses of arithmetic oracles.
For initializing $\tilde{O}_{\mathcal{P}_{\rm temp}}$, $O(\kappa n)$ time is taken at the very beginning of Algorithm \ref{alg:QMCIbased}, where $\mathcal{P}_{\rm temp}=\emptyset$ and thus $k_{i,1}=\cdots=k_{i,\kappa}=-1$ for any $i\in[n]_0$.
After that, every time an index pair $(k,i)$ is added to $\mathcal{P}_{\rm temp}$ in the QAA loop in Algorithm \ref{alg:QMCIbased}, one memory cell in $\tilde{O}_{\mathcal{P}_{\rm temp}}$ is updated, which takes $O(1)$ time.

Let us summarize the above discussion.
At the beginning both of the naive iteration method and the QMCI method, we need to initialize the QRAMs $O_{\rm seq}$, $O_{\rm PWM}$ and $\tilde{O}_{\mathcal{P}_{\rm temp}}$, which takes $O(n+m|\mathcal{A}|K+\kappa n)$ time in total.
If we reasonably assume that $m|\mathcal{A}|K< n$ and $\kappa=O(1)$, the time complexity is estimated as $O(n)$.

Although we need to take $O(n)$ time at the preliminary stage, after that the quantum algorithms run with complexities shown in Theorems \ref{th:NaiveIter} and \ref{th:QMCIBased}, which scales with $n$ as $O(\sqrt{n})$, for any sequence $S$ and any PWMs $\{M_k\}_k$.
Also note that, once we prepare $O_{\rm seq}$, whose preparation is the bottleneck under the current assumption, we can search the matches between $S$ and another set of $K$ PWMs $\{M^\prime_k\}_k$ by preparing $O_{\rm PWM}$ and $\tilde{O}_{\mathcal{P}_{\rm temp}}$ and running the quantum algorithm, which no longer takes $O(n)$ time\footnote{Initializing $\tilde{O}_{\mathcal{P}_{\rm temp}}$ seems to take $O(\kappa n)$ time again. However, if we have $\tilde{O}_{\mathcal{P}_{\rm temp}}$ used in the previous algorithm run, resetting its updated memory cells gives us the properly initialized $\tilde{O}_{\mathcal{P}_{\rm temp}}$.
Since the number of the memory cells to be reset is equal to that of the matches found in the previous run and it is usually much smaller than $n$, this reset-based initialization does not take $O(n)$ time.}.
As far as the authors know, there is no known method for PWM matching in which initialization takes $O(n)$ time and the main search algorithm takes the sublinear complexity to $n$.
As an algorithm having the initialization cost of same order, we refer to \cite{beckstette2006fast}, for example.
In this classical algorithm based on an enhanced suffix array (ESA), it takes $O(n)$ time to construct ESA.
After that, the worst-case complexity to find matches is $O(n+m)$ if some condition is satisfied, but it can be $O(nm)$ in the general case.

\subsection{Score threshold in the large $m$ limit \label{sec:ScoreTol}}

Here, we consider the asymptotic distribution of scores of segment when the sequence motif length $m$ is large and, based on it, discuss the plausible setting on the score threshold.

In many cases, the score threshold $w_{\rm th}$ in matching with a PWM $M\in\mathbb{R}^{m\times|\mathcal{A}|}$ is determined by the $p$-value.
That is, we set $w_{\rm th}$ so that the probability that the score of a segment becomes equal to or larger than $w_{\rm th}$ is equal to the given value $p\in(0,1)$ in the background model.
Here, the background model means the assumption that, when we take a segment of length $m$ in the sequence $S$ randomly and denote by $u_i$ the alphabet in the $i$th position in the segment, $u_0,...,u_{m-1}$ are independent and identically distributed.
The rigorous definition is as follows.
Supposing that every $a\in\mathcal{A}$ is associated with $p_a\in(0,1)$ satisfying $\sum_{a\in\mathcal{A}}p_a=1$, we consider the finite probability space $(\mathcal{A}^m,\mathbb{P}_{\rm BG})$ consisting of the sample space $\mathcal{A}^m$ and the probability function $\mathbb{P}_{\rm BG}:\mathcal{A}^m\rightarrow\mathbb{R}_{\ge 0}$ such that, for any $u_0..u_{m-1}\in\mathcal{A}^m$,
\begin{equation}
    \mathbb{P}_{\rm BG}(u_0..u_{m-1})=\prod_{i=0}^{m-1}p_{a_i},
\end{equation}
if $u_0=a_0,...,u_{m-1}=a_{m-1}$ with $a_0,...,a_{m-1}\in\mathcal{A}$.
Then, we define
\begin{equation}
    w_{\rm th}:= \max\{w\in\mathbb{R} \ | \ \mathbb{P}_{\rm BG}\left(\{u_0..u_{m-1} \ | \ W_M(u_0..u_{m-1})\ge w\}\right)\ge p\},  
\end{equation}
where, for any subset $U\in\mathcal{A}^m$, we define $\mathbb{P}_{\rm BG}(U):=\sum_{u\in U}\mathbb{P}_{\rm BG}(u)$.

Now, we regard $W:=W_M(u_0..u_{m-1})$ as a random variable and consider its asymptotic distribution in the case of large $m$.
We use the following theorem, a variant of the central limit theorem.
\begin{theorem}[Theorem 27.4 in \cite{billingsley2008probability}, modified]
Let $\{X_n\}_{n\in\mathbb{N}_{\ge 0}}$ be the sequence of the independent random variables on some probability space $(\Omega,\mathcal{F},\mathbb{P})$ such that, for any $n\in\mathbb{N}_{\ge 0}$, $X_n$ has the expectation $\mu_n$ and the finite variance $\sigma_n^2$.
For each $n\in\mathbb{N}$, define $S_n:=\sum_{i=0}^{n-1} (X_i-\mu_i)$ and $s_n^2:=\sum_{i=0}^{n-1} \sigma_i^2$.
Suppose that there exists $\delta\in\mathbb{R}_+$ such that
\begin{equation}
    \lim_{n\rightarrow\infty} \frac{1}{s_n^{2+\delta}}\sum_{i=0}^{n-1} E_\mathbb{P}[|X_i-\mu_i|^{2+\delta}]=0. \label{eq:condCLT}
\end{equation}
Then, $S_n/s_n$ converges in distribution to a standard normal random variable, as $n$ goes to infinity.
\label{th:CLT}
\end{theorem}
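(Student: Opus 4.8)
The plan is to deduce this Lyapunov-type statement from the Lindeberg--Feller central limit theorem, so the first step is to show that the Lyapunov condition \eqref{eq:condCLT} implies the Lindeberg condition
\begin{equation}
\lim_{n\to\infty}\frac{1}{s_n^2}\sum_{i=0}^{n-1} E_\mathbb{P}\!\left[(X_i-\mu_i)^2\,\mathbbm{1}_{|X_i-\mu_i|\ge \epsilon s_n}\right]=0\quad\text{for every }\epsilon>0.
\end{equation}
This implication is elementary: on the event $\{|X_i-\mu_i|\ge\epsilon s_n\}$ one has $(\epsilon s_n)^\delta\le |X_i-\mu_i|^\delta$, so $(X_i-\mu_i)^2\le (\epsilon s_n)^{-\delta}|X_i-\mu_i|^{2+\delta}$; summing over $i$, dividing by $s_n^2$, and dropping the indicator on the right bounds the Lindeberg sum by $\epsilon^{-\delta}s_n^{-(2+\delta)}\sum_{i}E_\mathbb{P}[|X_i-\mu_i|^{2+\delta}]$, which tends to $0$ by \eqref{eq:condCLT}. (Note that \eqref{eq:condCLT} already presupposes finiteness of the $(2+\delta)$-th absolute moments.)

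Next I would set up the triangular array. Writing $Y_{n,i}:=(X_i-\mu_i)/s_n$, the normalized sum is $S_n/s_n=\sum_{i=0}^{n-1}Y_{n,i}$, where, for each fixed $n$, the $Y_{n,i}$ are independent with mean $0$ and $\sum_{i}E_\mathbb{P}[Y_{n,i}^2]=1$. The core computation is with characteristic functions: letting $\phi_{n,i}(t):=E_\mathbb{P}[e^{itY_{n,i}}]$, the characteristic function of $S_n/s_n$ factorizes as $\prod_{i=0}^{n-1}\phi_{n,i}(t)$, and the goal is to show this converges to $e^{-t^2/2}$ pointwise in $t$. Using the bound $|e^{ix}-1-ix+x^2/2|\le\min(|x|^2,|x|^3/6)$ I would expand $\phi_{n,i}(t)=1-\tfrac{t^2}{2}E_\mathbb{P}[Y_{n,i}^2]+\rho_{n,i}(t)$ and control $\sum_i|\rho_{n,i}(t)|$ by splitting each expectation over $\{|Y_{n,i}|<\epsilon\}$ and $\{|Y_{n,i}|\ge\epsilon\}$: the cubic bound handles the first region (giving a contribution $O(\epsilon |t|^3)$ after using $\sum_i E_\mathbb{P}[Y_{n,i}^2]=1$), while the quadratic bound together with the Lindeberg condition handles the second. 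Finally, invoking the standard product lemma (if $|z_i|,|w_i|\le1$ then $|\prod_i z_i-\prod_i w_i|\le\sum_i|z_i-w_i|$) to compare $\prod_i\phi_{n,i}(t)$ with $\prod_i e^{-t^2 E_\mathbb{P}[Y_{n,i}^2]/2}=e^{-t^2/2}$, and then applying L\'evy's continuity theorem, yields convergence in distribution of $S_n/s_n$ to the standard normal.

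The step I expect to be the main obstacle is the uniform control of the remainder $\sum_i|\rho_{n,i}(t)|$, since this is exactly where the triangular-array structure bites: the individual variances $E_\mathbb{P}[Y_{n,i}^2]=\sigma_i^2/s_n^2$ must each become negligible (the asymptotic-negligibility bound $\max_i \sigma_i^2/s_n^2\to0$, which itself follows from Lindeberg), and one must verify that the two split-region estimates combine to something vanishing under the iterated limit, first $n\to\infty$ and then $\epsilon\to0$. The careful bookkeeping of this two-sided expansion error, rather than any conceptually deep idea, is the delicate part; everything else reduces to the elementary inequality of the first step or to an appeal to the continuity theorem.
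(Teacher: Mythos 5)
Your proposal is correct and is essentially the intended proof: the reduction of the Lyapunov condition \eqref{eq:condCLT} to the Lindeberg condition via the pointwise bound $(X_i-\mu_i)^2\,\mathbbm{1}_{|X_i-\mu_i|\ge \epsilon s_n}\le (\epsilon s_n)^{-\delta}\,|X_i-\mu_i|^{2+\delta}$, followed by the triangular-array characteristic-function argument for Lindeberg--Feller, is exactly the classical route. The paper itself supplies no proof --- it cites Theorem 27.4 of \cite{billingsley2008probability}, modified only in indexing and notation, and Billingsley's proof proceeds along precisely the lines you describe --- so there is nothing to flag beyond noting that your ``main obstacle'' (the remainder bookkeeping, using $\max_i \sigma_i^2/s_n^2\to 0$) is handled in the standard way in that reference.
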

\noindent We can apply this theorem to the case of PWM matching by, for each $i\in[m]_0$, regarding $X_i$ in Theorem \ref{th:CLT} as $M(i,u_i)$, the score of the alphabet in the $i$th position in the background model.
$\mu_i$ and $\sigma_i$ correspond to the mean and the variance of the score of the alphabet in the $i$th position, that is,
\begin{equation}
    \mu_i = \sum_{a\in\mathcal{A}} p_a M(i,a)
\end{equation}
and
\begin{equation}
    \sigma_i^2 = \sum_{a\in\mathcal{A}} p_a (M(i,a)-\mu_i)^2,
\end{equation}
respectively.
We must check the condition (\ref{eq:condCLT}) is satisfied, and it is in fact satisfied in the following plausible situation. 
First, recall that we have rescaled the PWM so that Eq. (\ref{eq:M01}) holds.
Therefore,
\begin{equation}
    E_{\mathbb{P}_{\rm BG}}[|X_i-\mu_i|^{2+\delta}]\le 1 \label{eq:condCLTforPWM1}
\end{equation}
holds obviously.
Besides, we may additionally assume that there exist $r\in(0,1)$ and $\sigma_{\rm min}^2\in\mathbb{R}_+$ independent of $m$ such that, for at least $\lceil rm \rceil$ elements $i$ in $[m]_0$, $\sigma_i^2\ge \sigma_{\rm min}^2$ holds.
This means that, although in some part of the positions the position-wise score variances might be small, at least in the certain ratio $r$ of the positions the variances exceeds the level $\sigma_{\rm min}^2$. 
This assumption yields
\begin{equation}
    s_m^2\ge rm\sigma_{\rm min}^2. \label{eq:condCLTforPWM2}
\end{equation}
Combining Eqs. (\ref{eq:condCLTforPWM1}) and (\ref{eq:condCLTforPWM2}), we have
\begin{equation}
    \frac{1}{s_m^{2+\delta}}\sum_{i=0}^{m-1} E_{\mathbb{P}_{\rm BG}}[|X_i-\mu_i|^{2+\delta}] \le \frac{m}{(rm)^{1+\delta/2}\sigma_{\rm min}^{2+\delta}}
\end{equation}
for any $\delta\in\mathbb{R}_+$, which converges to 0 in the large $m$ limit.

Therefore, in the large $m$ limit, we can approximate as
\begin{equation}
    \mathbb{P}_{\rm BG}\left(W\ge w_{\rm th} \right) \approx \int^\infty_{(w_{\rm th}-\tilde{\mu}_m)/s_m} \frac{1}{\sqrt{2\pi}} e^{-x^2/2} dx, \label{eq:scoreDistAsym}
\end{equation}
with $\tilde{\mu}_m:=\sum_{i=0}^{m-1}\mu_i=E_{\mathbb{P}_{\rm BG}}[W]$ being the expected segment score under the background model.

Considering the above asymptotic distribution of $S$, it seems reasonable to set $w_{\rm th}$ as $\tilde{\mu}_m+x\sigma_{\rm tot}$ with some $x\in\mathbb{R}_+$.
Alternatively, if we set the two levels of the score $w_{\rm hard}$ and $w_{\rm soft}$ in the QMCI-based method, it seems plausible to set them as $w_{\rm soft}=\tilde{\mu}_m+x_{\rm soft}s_m$ and $w_{\rm hard}=\tilde{\mu}_m+x_{\rm hard}s_m$ with $x_{\rm soft},x_{\rm hard}\in\mathbb{R}_+$ such that $x_{\rm soft}<x_{\rm hard}$ and $x_{\rm hard}-x_{\rm soft}=O(1)$.
For example, $x_{\rm soft}=3$ and $x_{\rm hard}=4$, which correspond to the $p$-values $1.35\times 10^{-3}$ and $3.17\times 10^{-5}$, respectively, in the approximation as Eq. (\ref{eq:scoreDistAsym}).
In such a setting, using $s_m=\Omega(\sqrt{m})$ that follows from Eq. (\ref{eq:condCLTforPWM2}), we get Eq. (\ref{eq:whardwsoftdiff}) and then the complexity bounds (\ref{eq:compQMCIOseq3}) and (\ref{eq:compQMCIOseq4}).

\section{Summary \label{sec:sum}}

In this paper, we have proposed the two quantum algorithms for an important but time-consuming problem in bioinformatics, PWM matching, which aims to find sequence motifs in a biological sequence whose scores defined by PWMs exceed the threshold.
Both of these algorithms, the naive iteration method and the QMCI-based method, utilize QAA for search of high-score segments.
They are differentiated by how to calculate the segment score.
The former calculates it by simply iterating to add up each position-wise score by quantum circuits for arithmetic.
The latter uses QMCI for this summation, coping with false detection due to the QMCI error by setting two levels of threshold $w_{\rm soft}$ and $w_{\rm hard}$.
Given the oracular accesses to the entries in the sequence and PWMs, both of the quantum algorithms runs with query complexity scaling with the sequence length $n$ and the number of PWMs $K$ as $O(\sqrt{Kn})$, thanks to the well-known quadratic speedup by QAA.
Furthermore, under some setting on $w_{\rm soft}$ and $w_{\rm hard}$, the complexity of the QMCI-based method scales with the sequence motif length $m$ as $O(\sqrt{m})$.
These mean the quantum speedup over existing classical algorithms.
Although our quantum algorithms take $O(n)$ preparation time for initialization of QRAMs, they still have the advantage especially when we perform matching between a sequence and many PWMs.

It is interesting that these algorithms for bioinformatics are inspired by the algorithm in \cite{Miyamoto2022} for a completely different field, gravitational wave astronomy.
We expect that the scheme used in these algorithms, the combination of QMCI and QAA, is widely useful over various industrial and scientific fields.
In future work, we will explore other applications of quantum algorithms in bioinformatics and other fields.

\section*{Acknowledgement}

This work is supported by MEXT Quantum Leap Flagship Program (MEXT Q-LEAP) Grant no. JPMXS0120319794 and JPMXS0118067285.
K.M. is supported by Japan Society for the Promotion of Science (JSPS) KAKENHI Grant no. JP22K11924.

\appendix

\section{Proof of Theorem \ref{th:QAA} for the case that $a=0$ \label{sec:casea0}}

\begin{proof}
    As described in \cite{Miyamoto2022} and the original paper \cite{brassard2002}, in QAA, we repeatedly generate $G^j\ket{\Phi}$ with various $j\in\mathbb{N}_{\ge 0}$ and measure $R_2$, and output (A) if and only if the measurement outcome is 1.
    Here,
    \begin{equation}
        G:=-AS_0A^{-1}S_\chi,
    \end{equation}
    where $S_0$ and $S_\chi$ are the unitary operators on the system under consideration acting as follows:
    \begin{equation}
        S_\chi\ket{\phi}\ket{x}=
        \begin{cases}
            \ket{\phi}\ket{0} & ; \ {\rm if} \ x=0 \\ 
            -\ket{\phi}\ket{1} & ; \ {\rm if} \ x=1
        \end{cases}
    \end{equation}
    with any state $\ket{\phi}$ on $R_1$, and
    \begin{equation}
    S_0\ket{\Phi^\prime}=
    \begin{cases}
        -\ket{0}\ket{0} & ; \ {\rm if} \ \ket{\Phi^\prime}=\ket{0}\ket{0} \\ 
        \ket{\Phi^\prime} & ; \ {\rm if} \ \bra{\Phi^\prime}\left(\ket{0}\ket{0}\right) = 0
    \end{cases}.
    \end{equation}
    As we can see easily, if $a=0$, $G\ket{\Phi}=\ket{\Phi}=\ket{\phi_0}\ket{0}$ holds, and thus we never get 1 in measuring $R_2$ in $G^j\ket{\Phi}$ for any $j$.
    Therefore, (A) is never output, which means that (B) is output certainly.
\end{proof}

For the detailed procedure of $\proc{QAA}(A,\gamma,\delta)$ and the proof of Theorem \ref{th:QAA} in the other cases, see \cite{Miyamoto2022} and the original paper \cite{brassard2002}.

\section{Proof of Theorem \ref{th:QMCI} \label{sec:PrThQMCI}}

Before the proof, we present the following fact, Theorem 4 in \cite{Miyamoto2022}.
It is almost same as Lemma 2.1 in \cite{Montanaro2015}, but slightly modified in reference to the original one, Lemma 6.1 in \cite{JERRUM1986169}.

\begin{lemma}
    Let $\mu\in\mathbb{R}$ and $\epsilon\in\mathbb{R}_+$.
    Let $\mathcal{A}$ be an algorithm that outputs an $\epsilon$-approximation of $\mu$ with probability $\gamma\ge \frac{3}{4}$.
    Then, for any $\delta\in(0,1)$, the median of outputs in $12\left\lceil\log\delta^{-1}\right\rceil+1$ runs of $\mathcal{A}$ is an $\epsilon$-approximation of $\mu$ with probability at least $1-\delta$.
    \label{lem:median}
\end{lemma}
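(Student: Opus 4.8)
The plan is to use the standard ``powering'' (median-amplification) argument, reducing the statement to a tail bound on a sum of independent indicators. First I would run $\mathcal{A}$ independently $N:=12\lceil\log\delta^{-1}\rceil+1$ times, obtaining outputs $Y_1,\dots,Y_N$, and for each $j\in[N]$ introduce the indicator $Z_j:=\mathbbm{1}_{|Y_j-\mu|\le\epsilon}$ of the event that the $j$th run yields an $\epsilon$-approximation of $\mu$. By hypothesis the $Z_j$ are independent and each satisfies $\mathbb{P}[Z_j=1]\ge\gamma\ge\frac{3}{4}$. Note that $N$ is odd, say $N=2t+1$ with $t=6\lceil\log\delta^{-1}\rceil$, which is convenient for the order-statistics step below.

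The key structural observation is purely deterministic: if strictly more than half of the values $Y_1,\dots,Y_N$ lie in the interval $[\mu-\epsilon,\mu+\epsilon]$, then so does their median. Indeed, the median is the $(t+1)$th smallest of the $Y_j$; if at least $t+1$ of them are $\le\mu+\epsilon$, then at most $t$ exceed $\mu+\epsilon$, so the $(t+1)$th smallest is $\le\mu+\epsilon$, and a symmetric argument gives that it is $\ge\mu-\epsilon$. Consequently the median fails to be an $\epsilon$-approximation of $\mu$ only on the event that at most $t$ of the runs are good, i.e. only if $\sum_{j=1}^N Z_j\le t<\frac{N}{2}$.

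It then remains to bound $\mathbb{P}\!\left[\sum_{j=1}^N Z_j\le \frac{N}{2}\right]$. Since $E\!\left[\sum_{j}Z_j\right]\ge\frac{3}{4}N$, the event $\sum_j Z_j\le\frac{N}{2}$ forces a downward deviation of at least $\frac{N}{4}$ from the mean, so a Hoeffding bound for sums of independent $[0,1]$-valued variables yields
\begin{equation}
\mathbb{P}\!\left[\sum_{j=1}^N Z_j\le \frac{N}{2}\right]\le \exp\!\left(-2\cdot\frac{(N/4)^2}{N}\right)=e^{-N/8}.
\end{equation}
Substituting $N>12\log\delta^{-1}$ gives $e^{-N/8}<e^{-\frac{3}{2}\log\delta^{-1}}=\delta^{3/2}\le\delta$, since $\delta\in(0,1)$, which establishes the claim.

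I expect the only delicate points to be the order-statistics step and the matching of the constant. One must verify carefully that a strict majority of in-interval samples pins the median inside $[\mu-\epsilon,\mu+\epsilon]$ (this is exactly where taking $N$ odd, via the $+1$ in the sample count, is used), and one must check that the chosen tail estimate is strong enough for the stated number of runs. Fortunately the Hoeffding bound carries enough slack that the final inequality $e^{-N/8}\le\delta$ holds whether $\log$ denotes the natural or the base-two logarithm, so no sharper concentration inequality is needed.
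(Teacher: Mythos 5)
Your proof is correct: the deterministic order-statistics step (a strict majority of in-interval samples pins the $(t+1)$th smallest of $N=2t+1$ values inside $[\mu-\epsilon,\mu+\epsilon]$), the Hoeffding estimate $e^{-N/8}<\delta^{3/2}\le\delta$, and your remark that the constant $12$ leaves enough slack for $\log$ in either base all check out. Note that the paper does not prove Lemma~\ref{lem:median} at all---it imports it by citation from \cite{Miyamoto2022}, tracing back to Lemma~2.1 of \cite{Montanaro2015} and the powering lemma of \cite{JERRUM1986169}---and your argument is precisely the standard median-amplification proof given in those sources, so it simply supplies, correctly, the proof the paper delegates to its references.
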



Then, the proof of Theorem \ref{th:QMCI} is as follows.
\begin{proof}[Proof of Theorem \ref{th:QMCI}]
    According to Theorem 7 in \cite{Miyamoto2022}, for any integer $t$ larger than 2, we can construct the oracle $\tilde{O}_{\mathcal{X},t}^{\rm mean}$ that acts as $\tilde{O}_{\mathcal{X},t}^{\rm mean}\ket{0}=\sum_{y\in \mathcal{Y}} \alpha_y\ket{y}$ using $O_{\mathcal{X}}$ $O(t)$ times.
    Here, $\mathcal{Y}$ is a finite set of real numbers that includes a subset $\tilde{\mathcal{Y}}$ consisting of elements $\tilde{\mu}$ satisfying
    \begin{equation}
        |\tilde{\mu}-\mu|\le C\left(\frac{\sqrt{\mu}}{t}+\frac{1}{t^2}\right)
    \end{equation}
    with a universal real constant $C$, and $\{\alpha_y\}_{y\in\mathcal{Y}}$ are complex numbers satisfying $\sum_{\tilde{y}\in\tilde{\mathcal{Y}}}|\alpha_{\tilde{y}}|^2\ge 8/\pi^2$.
    Following this, we prepare a system with $J$ quantum registers and generate the state
    \begin{equation}
        \ket{\Psi}:=\left(\sum_{y_1\in \mathcal{Y}} \alpha_{y_1}\ket{y_1}\right)\otimes\cdots\otimes\left(\sum_{y_J\in \mathcal{Y}} \alpha_{y_J}\ket{y_J}\right)
    \end{equation}
    by operating $\tilde{O}_{\mathcal{X},t}^{\rm mean}$ on each register.
    Here, $J$ and $t$ are set as
    \begin{equation}
    J=12\left\lceil\log\delta^{-1}\right\rceil+1,t=\left\lceil\frac{2C}{\epsilon}\right\rceil.
    \end{equation}
    It can be shown by easy algebra that, in this setting, each $\tilde{\mu}\in\tilde{\mathcal{Y}}$ satisfies $|\tilde{\mu}-\mu|\le\epsilon$.
    By measuring $\ket{\Psi}$, we obtain $J$ real numbers $y_1,...,y_J$, each of which is a $\epsilon$-approximation of $\mu$ with probability at least $\frac{8}{\pi^2}>\frac{3}{4}$.
    Therefore, because of Lemma \ref{lem:median}, the median of $y_1,...,y_J$ is a $\epsilon$-approximation of $\mu$ with probability at least $1-\delta$.
    This means that, if we generate
    \begin{equation}
        \ket{\Psi^\prime}:=\left(\sum_{y_1\in \mathcal{Y}} \alpha_{y_1}\ket{y_1}\right)\otimes\cdots\otimes\left(\sum_{y_J\in \mathcal{Y}} \alpha_{y_J}\ket{y_J}\right)\ket{{\rm med}(y_1,...,y_J)}
    \end{equation}
    by adding one more register to $\ket{\Psi}$ and then using $O^{\rm med}_J$, this is actually the state in Eq. (\ref{eq:OmeanX}), with the first $J$ registers regarded as undisplayed.
    In summary, we can construct $O_{\mathcal{X},\epsilon,\delta}^{\rm mean}$ as
    \begin{equation}
        O_{\mathcal{X},\epsilon,\delta}^{\rm mean}=O^{\rm med}_J\left(\underbrace{\tilde{O}_{\mathcal{X},t}^{\rm mean}\otimes\cdots\otimes \tilde{O}_{\mathcal{X},t}^{\rm mean}}_{J}\otimes I\right),
    \end{equation}
    where $I$ is the identity operator on the Hilbert space for the register.
    Since each $\tilde{O}_{\mathcal{X},t}^{\rm mean}$ uses $O_{\mathcal{X}}$ $O(t)$ times, $O_{\mathcal{X},\epsilon,\delta}^{\rm mean}$ uses $O_{\mathcal{X}}$ $O(tJ)$ times, that is, $O\left(\frac{1}{\epsilon}\log\left(\frac{1}{\delta}\right)\right)$ times, in total.
\end{proof}

\bibliographystyle{apsrev4-2}
\bibliography{reference}

\end{document}